\newcommand\dateymd{\number\year, \ifcase\month\or
January\or February\or March\or April\or May\or June\or
July\or August\or September\or October\or November\or
December\fi, \number\day}
\newcommand\printtime{%
\c@hours=\time \divide\c@hours by60
\c@minutes=\c@hours \multiply\c@minutes by-60
\advance \c@minutes by \time
\ifnum\c@hours<10 0\fi\the\c@hours:%
\ifnum\c@minutes<10 0\fi\the\c@minutes}
\theoremstyle{plain}
\newtheorem{proposition}{Proposition}
\newtheorem{lemma}{Lemma}
\newtheorem{theorem}{Theorem}[section]
\newtheorem{corollary}{Corollary}
\begin{document}

\thispagestyle{empty}
\null\vskip-20mm\null
\begin{center}
\hrule
\vskip7.5mm
\large
\textbf{\uppercase{The method of Enestr\"om and Phragm\'en}}\par
\vskip2pt
\textbf{\uppercase{for parliamentary elections}}\par 
\vskip2pt
\textbf{\uppercase{by means of approval voting}}\par
\vskip4pt
\textsc{Rosa Camps,\, Xavier Mora \textup{and} Laia Saumell}
\par
Departament de Matem\`{a}tiques,\break
Universitat Aut\`onoma de Barcelona,\break
Catalonia
\par\medskip
\texttt{xmora\,@\,mat.uab.cat}
\par\medskip
May 28, 2019; \,revised October 4, 2023
\vskip7.5mm
\hrule
\end{center}

\null\vskip-7.5mm\null

\vfil
\onehalfspacing

\begin{abstract}
We study a method for proportional representation that
was proposed at the turn from the nineteenth to the twentieth century by
Gustaf Eneström and Edvard Phragmén.
Like Phragmén's better-known iterative minimax method,
it is assumed that the voters express themselves by means of approval voting.
In contrast to the iterative minimax method, however,
here one starts by fixing a quota,
i.\,e.~the number of votes that give the right to a seat.
As a matter of fact, the method of Eneström and Phragmén can be seen as
an extension of the method of largest remainders from closed lists to open lists,
or also as
an adaptation of the single transferable vote to approval rather than preferential voting.
The properties of this method are studied and compared with those of other methods of the same kind.
\end{abstract}
\vskip3mm

\singlespacing
\vfil
\newpage

\newcommand\apr{\,\surd\,}
\newcommand\notapr{\kern-.1em\not\kern-.45em\apr}
\newcommand\notap{\,\not\kern-0.15em\apr}

\newcommand\ccands{I}
\newcommand\ncands{\ell}
\newcommand\tipus{k}
\newcommand\nvots{v}
\newcommand\nescons{n}		
\newcommand\quota{q}
\newcommand\quotaa{q'}
\newcommand\nvotstipus{v}
\newcommand\nvotscand{w}
\newcommand\capacitat{m}
\newcommand\scands{J}
\newcommand\nvotsexact{v}
\newcommand\nvotsalgun{w}
\newcommand\nvotstots{y}
\newcommand\nvotsnomes{z}
\newcommand\iesco{s}		
\newcommand\factor{r}

\newcommand\pas[1]{\unskip[#1]}
\newcommand\ini{\unskip[0]}
\newcommand\ara{\unskip[s]}
\newcommand\sseg{s\kern-.05em+\kern-.05em1}
\newcommand\seg{\unskip[\sseg]}
\newcommand\sant{s\kern-.05em-\kern-.05em1}
\newcommand\ant{\unskip[\sant]}
\newcommand\npen{n\kern-.05em-\kern-.05em1}
\newcommand\pen{\unskip[\npen]}
\newcommand\fin{\unskip[n]}
\newcommand\nseg{n\kern-.05em+\kern-.05em1}
\newcommand\atp{\unskip[p]}
\newcommand\att{\unskip[t]}

\newcommand\nvotscandmax{w_*}
\newcommand\imax{i_*}
\newcommand\droop{q_D}
\newcommand\mult{p}

\newcommand\pq[2]{\raise.25ex\hbox{\footnotesize${#1}\over{#2}$}%
\hskip-.35ex\null}
\newcommand\pqbis[2]{\leavevmode\raise.5ex\hbox{\scriptsize$#1$}\hskip-.2ex\raise.25ex\hbox{\scriptsize$/$}\hskip-.2ex\lower.2ex\hbox{\scriptsize$#2$}}
\newcommand\onehalf{\frac12}

\newcommand\itemind[1]{\leavevmode\hbox to15mm{\hss$#1$\hss}\ignorespaces}

%
%

\section*{Introduction}

\vskip-0.5mm
Recently, attention has been drawn to 
methods of proportional representation
where the electors express themselves by means of approval voting,
i.\,e.~each elector indicates an unordered list of the candidates that he approves of to represent him.
This has lead to a renewed interest in several methods of this kind
that had been devised at the end of the 19th century.
See, for instance, \cite{janson:2016, janson:2018}, \,\cite{brams:2019}, \,\cite{brill:2018} and the references therein. 



\smallskip
This note looks at a
method of this kind
that is associated with the names of Gustaf Eneström and Edvard Phragmén.
As in other better-known works of Phragmén (\citeyear{phragmen:1894}, \citeyear{phragmen:1895}, \citeyear{phragmen:1896}, \citeyear{phragmen:1899}), it is assumed that the electors express themselves by means of approval voting.
However, the procedure differs from that of those other works
in that here one starts by fixing a quota, i.\,e.~the number of votes that give the right to a seat.

More specifically, this method can be seen as an extension
of the method of largest remainders from party lists to (unordered) open lists.
On the other hand, it can also be seen as an adaptation of the single transferable vote
to approval voting instead of preferential voting.



\smallskip
This method was clearly formulated by Eneström in \citeyear{enestrom:1896}.
After that, Gustav Cassel described the same procedure in \citeyear{cassel:1903} 
under the name of ``Phragmén's first method''.
In fact, the main principle of the procedure can be found 
in a report of a lecture that Phragmén had given in \citeyear{phragmen:1893}\linebreak
(although the final proposal of that lecture was based on preferential voting).
In the subsequent years, Phragmén concentrated on a different idea,
without any comment on Eneström's (\citeyear{enestrom:1896}) work
(which he surely knew about; see the newspaper letters \citealt{enestrom:1896a} and \citealt{phragmen:1896a},
published together on the same day).
Later on, however, he detailed several variations of an approval-voting procedure based on that principle.
This was done 
in three narrowly-circulated memoirs from 1906
(\citealt{phragmen:1906a}, \citeyear{phragmen:1906b}, \citeyear{phragmen:1906c}).

These memoirs were motivated by a request from Finland, where a parliamentary reform was on its way.
That request had been addressed to Gösta Mittag-Leffler and his colleague mathematicians at Stockholm,
who were also motivated by the case of Sweden itself \citep[p.\,512--513]{stubhaug:2010}.
Neither Finland nor Sweden adopted Phragmén's proposal.
In~fact, Finland adopted a procedure based upon preferential voting and the so-called harmonic Borda method
(which name we take from \citealt{janson:2018}).
And three years later Sweden adopted D'Hondt's rule
together with some supplementary rules for determining which particular candidates are picked from every party,
one of these rules being the so-called Thiele's addition method \citep{thiele:1895}.
More historical details can be found in \cite{janson:2016}.


\smallskip
As we shall see, the method of Eneström and Phragmén
---more specifically, the particular variation that will be singled out below---
enjoys certain proportionality properties which are not fulfilled by
the harmonic Borda method nor by Thiele's addition one.
On the other hand, the comparison with Phrag\-mén's iterative minimax method
---that of 1894--99, called ``Phragmén's second method'' by \cite{cassel:1903}---
is more closely contested, with some advantage in favour of Eneström and Phragmén
in what respects simplicity.

\smallskip
Besides individual candidates, we will also allow for a candidate to consist of several individuals. 
This is appropriate to deal with the hypothetical but meaningful case where
the options that are submitted to approval voting are disjoint party lists rather than individual candidates
(similarly to \citealt{hill:2011}, who advocates for preferential voting in terms of parties).
In this case, the method is expected to answer the question of how many seats should be allocated to each party.

\section{The procedure}



\subsection{Generalities}

The Eneström-Phragmén procedure distributes a given number of seats \linebreak among several candidates
which have been subject to approval voting by a set of electors.
It has an iterative character. At each step a seat is allocated to the candidate with the largest number of votes
and the used votes lose a fraction of their value in accordance with a previously fixed quota.

\bigskip\noindent
We will use the following notation:

\medskip\noindent
\itemind{n} 	number of seats to be distributed

\medskip\noindent
\itemind{I}		set of eligible candidates\\
\itemind{i}		a candidate\\
\itemind{m_i}	capacity of candidate $i,$ i.\,e.\ number of seats that it can fill\\
\itemind{}		\llap($=1$ for individual candidates) 

\medskip\noindent
\itemind{k} 		a `type' of votes (or electors)\\
\itemind{A_k}		set of candidates approved by the electors~$k$ \ (we assume $A_k\neq\emptyset$)\\
\itemind{k\!\apr i} another way to say that $i\in A_k$

\medskip\noindent
\itemind{v}		total number of votes\\
\itemind{v_k}	number of votes of type $k$ \ ($\sum_k v_k = v$)\\
\itemind{w_i}	number of votes that approve $i$: \ 
							$w_i \!=\! \sum_{k\apr i} v_k$\\
\itemind{n_i}	number of seats allocated to candidate $i$ \ ($n_i\le m_i$)

\medskip\noindent
\itemind{J} 	a subset of candidates\\
\itemind{m_J}	capacity of the set $J$: \ $m_J = \sum_{i\in J}m_i$\\
\itemind{v_J}	number of votes that approve exactly the set $J$: \
						\rlap{$v_J = \sum_{k,\,A_k=J} v_k$}\\
\itemind{y_J}	number of votes that approve all candidates from $J$: \
						$y_J = \sum_{k,\,A_k\supseteq J} v_k$\\
\itemind{}				obviously, \ $v_J \le y_J$\\
\itemind{n_J}	total number of seats allocated to members of  $J$: \ 
						$n_J = \sum_{i\in J} n_i$

\medskip\noindent
\itemind{q}		quota, i.\,e.\ number of votes that give the right to a seat

\medskip\noindent
\itemind{s}		ordinal number in the seat-allocation iterative procedure\\
\itemind{X\ara}	value of $X$ after allocating the $s$-th seat \ ($X = v,\,v_k,\,w_i,\,n_i,\,I,\dots$)\\
\itemind{I\ara}	set of candidates that are still eligible after allocating the $s$-th seat,
\itemind{}		i.\,e. such that $n_i\ara < m_i$

\subsection{Basic version}

In the works of Eneström and Phragmén one finds several variations in different respects.
In order to ensure better properties he have been led to
choose the particular combination that is described next
(which is not exactly any of the versions considered by those authors).


\medskip
To start with, one adopts the (unrounded) quota of Droop and Hagenbach-Bischoff
\citep{pukelsheim:2017}: 
\begin{equation}
\label{eq:droop}
q \,=\, v \,/\, (\nseg)
\end{equation}

\medskip
The $n$ seats are allocated by means of an iterative procedure.
In the following, $s$ is a counter that will increase from $0$ to $n.$
We start with $s=0,$
$v_k\ini = v_k$ (the number of votes of each type $k$),
$n_i\ini = 0$ (no seats have been allocated)
and $I\ini = I$ (all candidates are eligible).

\medskip
For each $s$ and every $i\in I\ara$ (the eligible candidates)
one considers the presently existing votes in support of candidate $i$:
\begin{equation}
\label{eq:wi}
w_i\ara \,=\, \sum_{k\apr i} v_k\ara.
\end{equation}
Seat $\sseg$ is allocated to a candidate $\imax\in I\ara$ with a maximum support, 
i.\,e.\ such that
\begin{equation}
\label{eq:maximizer}
w_{\imax}\ara \,=\, \max_{i\in I\ara}\,w_i\ara \,=:\, w_*\ara
\end{equation}
(if several maximizers exist, any of them is allowed). So
\begin{equation}
\label{eq:allocation}
n_{\imax}\seg \,=\, n_{\imax}\ara + 1,
\end{equation}
and this candidate ceases to be eligible if its capacity has been reached:
\begin{equation}
I\seg \,=\, I\ara\setminus\{\imax\},\quad \text{if $n_{\imax}\seg = m_{\imax}$}
\end{equation}

%
%
%

\medskip
The allocation of seat~$\sseg$ to candidate $\imax$ is done
at the expense of
a certain fraction of the $w_*\ara$ votes that supported $\imax.$
More specifically, the votes in support of $\imax$ are considered to lose value in accordance with a common factor:
\begin{alignat}{2}
\label{eq:reduction}
v_k\seg \,&=\, \left(1 - \frac{q}{w_*\ara}\right)\,v_k\ara,&\qquad &\text{for $k\apr\imax$ and $w_*\ara\ge q,$}\\
\label{eq:exhausted}
v_k\seg \,&=\, 0,&\qquad &\text{for $k\apr\imax$ and $w_*\ara< q,$}\\
\label{eq:noreduction}
v_k\seg \,&=\, v_k\ara,&\qquad &\text{for $k\kern-.1em\not\kern-.45em\apr\imax.$}
\end{alignat}


\medskip
As one can easily check, in the case $w_*\ara\ge q$ the following equalities hold:
\begin{align}
\label{eq:exactupdate}
w_{\imax}\seg \,&=\, w_{\imax}\ara - q,\\
\label{eq:exactupdateglobal}
v\seg \,&=\, v\ara - q.
\end{align}
That is, the seat~$\sseg$ has been allocated at the expense of exactly one quota.

\medskip
In the case $w_*\ara<q$ the seat is allocated at the expense of all the existing votes in support of $\imax$,
even though they do not complete a whole quota. So in this case one has
\begin{align}
\label{eq:exhaustedtotal}
w_{\imax}\seg \,&=\,0,\\
\label{eq:decrementlessthanquota}
v\seg \,&>\, v\ara - q.
\end{align}


\medskip
If $\sseg = n,$ we are finished.
Otherwise, the procedure is repeated with $s$ replaced by $\sseg.$

\bigskip\noindent
\textit{Remark 1.1}. We are assuming that the set of candidates $i$ such that $w_i > 0$ contains at least $n$ elements.

\bigskip\noindent
\textit{Remark 1.2}. From \eqref{eq:reduction}--\eqref{eq:noreduction} it is clear that $v_k\ara,$
$w_i\ara,$ and $w_*\ara$ are non-increasing functions of $s.$



\newcommand\xxsep{\hskip.7em}
\newcommand\xsep{\hskip.65em}

\medskip\bigskip\noindent
\textit{Example 1.1}. \label{exemple11}
Consider the election of 3~representatives out of 9~candidates $a,b,e,f,u,v,x,y,z$
with the following approval votes:
$$
\hbox to\hsize{\xxsep$s=0:\xxsep 21\ a\,b\,x,\xsep 20\ a\,b\,e\,f,\xsep 19\ e\,f\,u\,v,\xsep 13\ u\,v,\xsep 10\ x\,y,\xsep 15\ z,\xsep 2\ a\,e\,u.$\hfill}
$$
The total number of votes is $v=100.$ Therefore, the quota is $q = v/(\nseg) = 100/4 = 25.$
From the preceding votes, the approval support $w_i$ for each candidate $i$ is found to be as follows:
$$
\hbox to\hsize{\xxsep$s=0:\xxsep a\ 43,\xsep b\ 41,\xsep e\ 41,\xsep f\ 39,\xsep u\ 34,\xsep v\ 32,\xsep x\ 31,\xsep y\ 10,\xsep z\ 15.$\hfill}
$$
The highest value is that of candidate $a,$ which therefore becomes elected. In accordance with \eqref{eq:reduction}, the votes that contain $a$ get their value reduced by the factor $(1-25/43)=0.419$. This results in the following figures:
$$
\hbox to\hsize{\xxsep$s=1:$\xxsep\scalebox{0.9}{$8.791\ a\,b\,x,\xsep 8.372\ a\,b\,e\,f,\xsep 19\ e\,f\,u\,v,\xsep 13\ u\,v,\xsep 10\ x\,y,\xsep 15\ z,\xsep 0.837\ a\,e\,u.$}\hfill}
$$
The corresponding approval supports are now as follows, where we use parentheses to indicate those candidates that have already been elected and therefore are not eligible any more:
$$
\hbox to\hsize{\xxsep$s=1:$\xxsep\scalebox{0.8}{$(a\ 18),\xsep b\ 17.163,\xsep e\ 28.209,\xsep f\ 27.372,\xsep u\ 32.837,\xsep v\ 32,\xsep x\ 18.791,\xsep y\ 10,\xsep z\ 15.$}\hfill}
$$
So the second elected candidate is $u$. The votes that contain this candidate are now reduced by the factor $(1-25/32.837)=0.239$:
$$
\hbox to\hsize{\xxsep$s=2:$\xxsep\scalebox{0.8}{$8.791\ a\,b\,x,\xsep 8.372\ a\,b\,e\,f,\xsep 4.535\ e\,f\,u\,v,\xsep 3.103\ u\,v,\xsep 10\ x\,y,\xsep 15\ z,\xsep 0.200\ a\,e\,u,$}\hfill}
$$
which results in the following approval supports:
$$
\hbox to\hsize{\xxsep$s=2:$\xxsep\scalebox{0.72}{$(a\ 17.363),\xsep b\ 17.163,\xsep e\ 13.107,\xsep f\ 12.907,\xsep (u\ 7.837),\xsep v\ 7.637,\xsep x\ 18.791,\xsep y\ 10,\xsep z\ 15.$}\hfill}
$$
So $x$ is elected in third place, in spite of having less than a quota. 
The three elected candidates are thus $a,\,u$ and $x.$
Besides that, later on (\S\,\ref{substituts}) we will need to know also the remaining votes, which are as follows:
$$
\hbox to\hsize{\xxsep$s=3:$\xxsep\scalebox{0.8}{$0\ a\,b\,x,\xsep 8.372\ a\,b\,e\,f,\xsep 4.535\ e\,f\,u\,v,\xsep 3.103\ u\,v,\xsep 0\ x\,y,\xsep 15\ z,\xsep 0.200\ a\,e\,u,$}\hfill}
\label{romanents}
$$


\subsection{Variations}

\medskip
\paragraph{Simple fractions.} 
\label{pg:simple_fractions}
In \citeauthor{phragmen:1906a} (\citeyear{phragmen:1906a}, \citeyear{phragmen:1906b}, \citeyear{phragmen:1906c})
equations \eqref{eq:reduction}--\eqref{eq:exhausted} are replaced by the following one:
\begin{equation}
\label{eq:roundedreduction}
v_k\seg \,=\, \left(1 - \frac{1}{\lceil w_*\ara\,/\,q\rceil}\right)\,v_k\ara,\qquad\text{for $k\apr\imax.$}\\ 
\end{equation}
Notice that this equation agrees with \eqref{eq:reduction}--\eqref{eq:exhausted} in the case $w_*\ara\le q.$
However, for non-integer $w_*\ara/q > 1$ the seat is allocated in exchange of less than a whole quota;
in other words, the equality signs of \eqref{eq:exactupdate} and \eqref{eq:exactupdateglobal} are replaced by ``$\ge$''.
The rational factor in \eqref{eq:roundedreduction} was probably intended towards facilitating the computation by hand.
Another possible motivation was the comparison with Thiele's addition method (see \S\ref{ssec:thiele-addition}).



\paragraph{Other quotas.} 
The initial versions of this method
(\citealt{enestrom:1896}, \citealt[p.\,47--50]{cassel:1903}, \citealt{phragmen:1906a}, \citeyear{phragmen:1906b})
made use of the simple (a.\,k.\,a.\ Hare) quota $q = v \,/\, n.$
With this quota, the above procedure is an extension of the standard method of largest remainders.
According to \citet{mittagleffler:1906b}, the quota of Droop and Hagenbach-Bischoff achieves better results in comparison with Thiele's method
(which is related to the bad threshold of the latter, see \citealt{janson:2018}). 
In this connection, it should also be noted that the quota of Droop and Hagenbach-Bischoff had already been used by \cite{phragmen:1893} in computations of the kind of \eqref{eq:reduction}--\eqref{eq:noreduction}.


\paragraph{Quota updated at each step.}
In \citet{phragmen:1906a} a formula like \eqref{eq:roundedreduction} was used,
but instead of the original (Hare) quota $q = v\ini / n,$
an updated one was used, namely $q'\ara = v\ara / (n - s).$

\paragraph{Votes cease to count when they become empty}
(this variation makes sense only together with the preceding one). 
In this variant, equation \eqref{eq:reduction} is used only when the votes of type $k$ contain further eligible candidates,
i.\,e.\ when $A_k\cap I\seg\neq\emptyset;$ otherwise one puts $v_k\seg = 0.$

%
%

\paragraph{Threshold condition.}
If $w_*\ara$ is much smaller than $q$
(which case can easily arise for a large number of candidates)
then $\imax$ gets a seat in spite of having much less than a quota.
In this connection \citeauthor{phragmen:1906b} (\citeyear{phragmen:1906b}, \citeyear{phragmen:1906c}) proposed to require
a condition of the form $w_*\ara / q \ge \alpha$
for some previously fixed $\alpha\in[0,1];$ if this condition is not satisfied,
then the procedure would be stopped and a new election would be called.
More specifically he suggested $\alpha = \pqbis34$ (\citeyear{phragmen:1906b}, \citeyear{phragmen:1906c})
and in (\citeyear{phragmen:1906a}) he had taken $\alpha = \pqbis12.$
In our basic version we did not include such a condition,
which amounts to $\alpha = 0$
(and corresponds to the usual formulation of the method of largest remainders).


\paragraph{Negative numbers of votes.}
\label{pg:negative_numbers}%
This variant would apply \eqref{eq:reduction} no matter whether $w_*\ara$ is larger or smaller than $q.$
As we have already seen, this could result in negative values of $v_k\seg$ for $k\apr\imax.$
However, this could still make sense 
for the underlying overrepresentation to be corrected in subsequent steps.

\subsection{Uninominal voting}
\label{sec:uninominal}

The following propositions specify the behaviour of the Eneström-Phragmén method in the case of uninominal voting,
i.\,e.~when every elector approves one candidate and only one.
The first proposition, whose proof is obvious, is about the case of individual candidates.
The second one is about the case of disjoint party lists.

\begin{proposition}
Assume that all candidates have capacity one
and that every elector approves one of them and only one.
In this case the Eneström-Phragmén method amounts to selecting the $n$ most voted candidates.
\end{proposition}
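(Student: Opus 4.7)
The plan is to exploit the fact that under uninominal voting the support sets $A_k$ are singletons, so the vote-reduction step for the winning candidate is decoupled from every other candidate's support. Concretely, I would set up the notation so that each type $k$ corresponds to a single candidate $i_k$ with $A_k=\{i_k\}$; then $w_{i}\ini$ equals the sum of $v_k$ over those types with $i_k=i$, and conversely any type $k$ with $k\apr i$ satisfies $A_k=\{i\}$.

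First I would show the base step: at $s=0$ the method picks some $\imax$ maximising $w_i\ini$, which by definition is one of the most voted candidates. Then comes the key observation: in the update \eqref{eq:reduction}--\eqref{eq:noreduction}, only types $k$ with $k\apr\imax$ have their $v_k$ modified, and under uninominal voting every such type satisfies $A_k=\{\imax\}$. Consequently the modified $v_k\seg$ enter only into $w_{\imax}\seg$, and for every other candidate $j\neq\imax$ one has $w_j\seg=w_j\ara$. Since $m_{\imax}=1$, the candidate $\imax$ is also removed from $I\seg$, so the altered values of $v_k$ never matter again.

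By induction on $s$, the support $w_j\ara$ of every still-eligible candidate $j$ equals its initial value $w_j\ini$. Therefore the candidate selected at step $\sseg$ is one of those maximising $w_j\ini$ among the candidates not yet chosen, and after $n$ steps the elected set is exactly a set of $n$ most voted candidates.

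I do not expect any serious obstacle: the whole argument rests on the single remark that under uninominal voting the reduction of votes triggered by electing $\imax$ is invisible to every other candidate, because no vote approves two candidates simultaneously. The only care to be taken is to phrase the induction so that both branches \eqref{eq:reduction} and \eqref{eq:exhausted} are handled uniformly (both leave $w_j$ untouched for $j\neq\imax$), and to note that ties are resolved arbitrarily in the same way the statement allows ``the $n$ most voted candidates'' to be non-unique.
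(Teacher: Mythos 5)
Your argument is correct: under uninominal voting the reduction step affects only votes approving the just-elected candidate, who is then removed from the eligible set, so every remaining candidate keeps its initial support and the procedure greedily picks the $n$ largest $w_i\ini$. This is exactly the reasoning the paper has in mind when it declares the proof of this proposition obvious and omits it.
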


In the case of individual candidates,
the uninominal situation is certainly quite undesirable in the spirit of proportional representation.
Because strategies become possible whereby a set of electors
can get more representatives than they have a right to.
Nonetheless, in such a situation there is no better way than selecting the most voted candidates.
Hopefully, in practice electors will be lead to approve more than one candidate.
Those with a minoritary opinion will do it so as to provide more likely alternative candidates.
Those with a majoritary opinion will do it so as to obtain more representatives.

\smallskip
In the case of disjoint party lists
the uninominal situation still allows for a result in the spirit of proportional representation:

\begin{proposition}
\label{st:uninom-party}
Assume that all candidates have unlimited capacity
and that every elector approves one of them and only one.
In this case the Eneström-Phragmén method amounts to the method of largest remainders with the Droop quota.
\end{proposition}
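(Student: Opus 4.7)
The plan is to show that under uninominal voting the Enestr\"om-Phragm\'en update rules degenerate to a transparent greedy procedure on the party tallies $w_i,$ and then to match the outcome of this greedy procedure with the two stages of the method of largest remainders.

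First I would use the uninominal hypothesis to simplify the update rules \eqref{eq:reduction}--\eqref{eq:noreduction} to a rule that leaves every non-winning party untouched. Since each vote type $k$ approves exactly one party, the tally $w_j\ara$ of every $j\neq\imax$ is unchanged from step $s$ to step $\sseg;$ meanwhile the winning party sees its tally drop by exactly $q$ when $w_*\ara\ge q$ (equation~\eqref{eq:exactupdate}) and reset to zero otherwise (equation~\eqref{eq:exhaustedtotal}). Because the capacities are infinite, no party is ever removed from the pool, so the rule simply runs until all $n$ seats have been awarded.

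Next I would split the $n$ steps into two consecutive blocks, using the monotonicity of $w_*\ara$ recorded in Remark~1.2. Call Phase~1 the initial steps on which $w_*\ara\ge q,$ and Phase~2 the (possibly empty) block of remaining steps on which $w_*\ara<q.$ A short induction on $s$ shows that in Phase~1 party $i$ can win at most $\lfloor w_i/q\rfloor$ seats (each win subtracts $q$ from its tally) and, conversely, that Phase~1 cannot end while any surviving tally still exceeds $q;$ so every party leaves Phase~1 having won exactly $\lfloor w_i/q\rfloor$ seats, with a surviving tally equal to the Droop remainder $r_i = w_i - \lfloor w_i/q\rfloor\,q\in[0,q).$ Each step of Phase~2 then awards a single seat to the party with the largest current $r_i$ and immediately resets that tally to zero, so the $n-\sum_i\lfloor w_i/q\rfloor$ leftover seats are handed out, one per party, in decreasing order of $r_i.$

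Matching this outcome with the method of largest remainders under the Droop quota $q=v/(\nseg)$ is then immediate: each party receives $\lfloor w_i/q\rfloor$ full-quota seats, and the leftover seats go to the parties with the largest remainders. The main obstacle I foresee is the borderline configuration $\sum_i\lfloor w_i/q\rfloor = n+1,$ the only way in which Phase~1 could overflow the $n$ available steps; with the Droop quota this case forces every $r_i$ to vanish, so the two procedures still coincide, but the equality hinges on the tie-breaking convention and deserves an explicit comment.
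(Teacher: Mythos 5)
Your proposal is correct and follows essentially the same route as the paper's proof: the split into a full-quota phase and a remainder phase is exactly the paper's Case~(a) built around $t=\min\{s\mid w_*\ara<q\}$, and your closing remark on the overflow configuration $\sum_i\lfloor w_i/q\rfloor=n+1$ is precisely the paper's Case~(b) ($w_*\fin\ge q$), resolved there too by appealing to the allowed alternative allocations of largest remainders.
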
 

\begin{proof}
In the present situation each type~$k$ corresponds to a single candidate~$i$ and viceversa.
So we can write $v_i$ instead of $v_k$ or $w_i$. We will distinguish two cases.

\medskip
\noindent
Case~(a): $w_*\fin < q.$\quad
Let us consider the number 
\begin{equation}
\label{eq:deft}
t \,=\, \min\,\{s\mid w_*\ara < q\},
\end{equation}
and let $t_i$ mean the number of seats that are allocated to $i$ in the first $t$ steps of the procedure,
i.\,e.~for $s = 0\dots t-1.$
The hypothesis that defines this case ensures that $t \le n.$
By the definition of $t,$ for $s<t$ every seat allocation is done at the expense of exactly one quota.
Therefore $v_i\att = v_i\ini - t_i\,q,$ or equivalently,
\begin{equation}
\label{eq:divent}
v_i\ini = t_i\,q + v_i\att,\qquad
\text{ where $0 \le v_i\att < q.$ }
\end{equation}
Here, the strict inequality at the right holds because $v_i\att \le w_*\att < q.$
So, $v_i\att$ is the remainder of dividing the number of votes $v_i\ini$ by the quota $q.$
If $t = n,$ there are no more seats to allocate and $n_i = t_i.$
It $t < n,$ then for $s$ in the interval $t\le s<n$ the corresponding seat is allocated in exchange of the whole remainder of votes $v_i\att$
(which can happen only once for each candidate).
So these $n-t$ seats will be allocated to the $n-t$ candidates with greatest remainders $v_i\att.$
\newcommand\ttt{\tau}

\medskip
\noindent
Case~(b): $w_*\fin \ge q.$\quad
In this case we can write
$$q \,\le\, w_*\fin \,\le\, v\fin \,=\, v\ini - n q \,=\, q,$$
where we have used the facts that $w_*\fin = \max_i v_i\fin$ and $v\fin = \sum_i v_i\fin.$
So, all the terms that we have just displayed are equal to each other.
In particular, $w_*\fin = q = v\fin.$
Now, this implies the existence of some candidate $j$ such that $v_j\fin = q$ whereas $v_i\fin = 0$ for $i \neq j.$
So $j$ is getting one seat less than the number of quotas contained in $v_j\ini.$
This can only happen for all the $v_i\ini$ being exact multiples of $q,$
and therefore all the positive $v_i\ara$ being equal to $q$ from some step $s$ on.
So the seat $\sseg$ can be allocated to any of the candidates with $v_i\ara = q > 0$
and such ties carry on to all the remaining seats,
just in the same way as largest remainders (with the Droop quota).
\end{proof}

\section{Proportionality properties}

\medskip
According to
\cite{phragmen:1906b}, ``for the mathematically trained reader the proportional character of the proposed rule for voting power reduction should be clear without further explanation''.
In this section we establish a couple of proportionality properties of the kind that is associated with the so-called ``exclusion thresholds'' (\citealt{janson:2018}).

\medskip\noindent
Let us recall that, for any given set $J$ of candidates, \,$v_J$\, and \,$y_J$\, mean respectively the number of votes that approve exactly the set $J$ and the number of those that approve \emph{at least} the set $J.$ In this connection, we will use also the following notation:
\begin{equation}
\label{eq:jstar}
J^* = \bigcup\,\{A_k\mid k\text{ such that } A_k\supseteq J\}.
\end{equation}
So $i\in J^*$ if and only if $i$ is approved by at least one voter who approves (also) all the elements of $J.$

\begin{theorem}
\label{st:droop}
For any subset $J$ of candidates and any $\ell \le \min(n, m_J),$\linebreak
if \,$v_J > \ell\,q$\, then \,$n_J \ge \ell.$
\end{theorem}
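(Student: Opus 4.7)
The plan is to argue by contradiction. I would suppose $n_J \le \ell-1$ and track the running weight of the ``exactly-$J$'' votes,
\[ v_J\ara \,=\, \sum_{k,\,A_k=J} v_k\ara, \]
so that $v_J\ini = v_J.$

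First I would establish a local bound: at each iteration $v_J\ara$ can decrease by at most $q,$ and only when the seat being allocated goes to a member of $J.$ If seat $\sseg$ is awarded to some $\imax\notin J,$ every $k$ with $A_k = J$ satisfies $k\notapr\imax,$ so \eqref{eq:noreduction} gives $v_J\seg = v_J\ara.$ If instead $\imax\in J,$ every such type satisfies $k\apr\imax,$ which simultaneously yields $v_J\ara \le \sum_{k\apr\imax} v_k\ara = w_*\ara$ and forces all these votes to be scaled by the common factor of \eqref{eq:reduction} (or set to zero by \eqref{eq:exhausted}). A short case split on whether $w_*\ara \ge q$ then gives $v_J\ara - v_J\seg \le q$ in both subcases. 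Summing across the $n$ iterations, $v_J\fin \ge v_J - n_J\,q.$

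Combined with the hypothesis $v_J > \ell q$ and the contradiction assumption, this yields $v_J\fin > q,$ and by monotonicity (Remark~1.2) $v_J\ara > q$ at every step. Because $\ell \le m_J$ and $n_J\ara \le \ell-1 < m_J,$ some candidate $i\in J$ remains eligible throughout the procedure; for that $i,$ $w_i\ara \ge v_J\ara > q,$ whence $w_*\ara > q$ as well. By \eqref{eq:exactupdateglobal} each of the $n$ iterations then removes exactly one quota from $v\ara,$ so $v\fin = v - nq = q$ via the Droop identity \eqref{eq:droop}. But $v\fin \ge v_J\fin > q,$ contradicting the preceding equality.

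The hard part will be the local bound $v_J\ara - v_J\seg \le q$ when $\imax\in J,$ which rests on the inequality $v_J\ara \le w_*\ara$ together with a careful reading of \eqref{eq:reduction}--\eqref{eq:exhausted}; once this bookkeeping is in place, the remainder is a clean counting argument anchored by the Droop identity.
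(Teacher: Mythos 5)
Your proof is correct and follows essentially the same route as the paper: the key inequality $v_J\seg \ge v_J\ara - q$ via $v_J\ara \le w_*\ara$, the use of $\ell \le m_J$ to keep a member of $J$ eligible, and the Droop accounting $v - nq = q$. The only difference is organizational: the paper truncates the argument at the first step $p$ with $w_*\pas{p} \le q$ (so the exhausted case \eqref{eq:exhausted} never arises) and contradicts the minimality of $p$, whereas you run all $n$ steps, handle the sub-quota case inside the local bound, and contradict the identity $v\fin = q$ --- the very identity the paper uses in its preparation to show $p \le n$.
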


\begin{theorem}
\label{st:pjr}
For any subset $J$ of candidates and any $\ell \le \min(n, m_J),$\linebreak
if \,$y_J > \ell\,q$\, then \,$n_{J^*} \ge \ell.$
\end{theorem}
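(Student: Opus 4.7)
I would proceed by contradiction, assuming $n_{J^*} < \ell.$ Setting $S = \{k \mid A_k \supseteq J\}$ and $U\ara := \sum_{k \in S} v_k\ara,$ so that $U\ini = y_J,$ the plan is to establish simultaneously the lower bound $U\fin > q$ and the equality $v\fin = q;$ these are incompatible because $v\fin \ge U\fin.$

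For the bound on $U\fin,$ I would first note that since $A_k \subseteq J^*$ for every $k \in S,$ the value $v_k\ara$ can change only at a step where $\imax \in A_k,$ and in particular only when $\imax \in J^*.$ Summing \eqref{eq:reduction} and \eqref{eq:exhausted} over $k \in S$ with $k\apr\imax$ then shows that the drop in $U$ at any such step is at most $q.$ Hence $U\fin \ge y_J - n_{J^*}\, q,$ and combining with $n_{J^*} \le \ell-1$ and $y_J > \ell q$ gives $U\fin > q.$

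The equality $v\fin = q$ will follow from \eqref{eq:exactupdateglobal} provided $w_*\ara > q$ at every step $s = 0, \dots, n-1,$ since then each step deducts exactly $q$ and the Droop choice \eqref{eq:droop} gives $v\ini - n q = q.$ The crux here is that $J \cap I\ara \neq \emptyset$ at every such step: indeed $n_J\ara \le n_{J^*}\ara \le n_{J^*} < \ell \le m_J,$ so the members of $J$ cannot yet all be saturated. For any $i^* \in J \cap I\ara,$ the inclusion $J \subseteq A_k$ valid for each $k \in S$ yields $w_{i^*}\ara \ge \sum_{k \in S} v_k\ara = U\ara,$ and the same accounting as above gives $U\ara \ge y_J - (\ell - 1) q > q.$ Therefore $w_*\ara \ge w_{i^*}\ara > q,$ as required.

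Putting the two together, $v\fin = q$ while $v\fin \ge U\fin > q,$ a contradiction, and hence $n_{J^*} \ge \ell.$ I expect the main obstacle to be spotting the dual role of the quota $q:$ one must play off the ``protected'' decrease of $U$ against the exact global deduction of $v$ at each step, and the hypothesis $\ell \le m_J$ enters in the innocent-looking but crucial step of keeping some $i^* \in J$ eligible throughout, so that $w_{i^*}\ara \ge U\ara$ is available at every~$s.$
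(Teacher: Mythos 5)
Your proof is correct and follows essentially the same route as the paper's: your $U\ara$ is exactly the paper's $y_J\ara$, and your two key estimates --- the per-step decrease of $U$ being at most $q$ and only at steps electing a member of $J^*$, together with $w_*\ara \ge w_{i^*}\ara \ge U\ara$ for an $i^*\in J$ kept eligible by $\ell\le m_J$ --- are precisely the paper's. The only difference is packaging: the paper localizes the contradiction at the first step $p$ with $w_*\atp\le q$ (after separately showing $p\le n$), whereas you run all $n$ steps and contradict $v\fin=q$ with $v\fin\ge U\fin>q$, which is the same accounting the paper uses in its preparatory claim.
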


\noindent
\textit{Preparation for the proofs.}
Instead of the number \eqref{eq:deft}, here we will consider the slightly different number
\begin{equation}
\label{eq:defp}
p \,=\, \min\,\{s\mid\nvotscandmax\ara\le\quota\}.
\end{equation}

\smallskip\noindent
We claim that $p\le \nescons,$ i.\,e.\ $\nvotscandmax\ara\le\quota$ for some $\iesco\le\nescons$.
In particular, $\nvotscandmax\fin\le\quota,$ that is, after the allocation of the last seat, no candidate has more than a quota.
In fact, if one has $\nvotscandmax\ara>\quota$ for all $\iesco\le\nescons-1$
then ---by \eqref{eq:exactupdateglobal}---
$\nvots\fin = \nvots - \nescons\quota = \quota$
and therefore $w_*\fin = w_{\imax}\fin \le v\fin = \quota.$

\medskip\noindent
For any subset $\scands$ of candidates we will consider the number $p_\scands$ of candidates of $\scands$ that are elected in the first $p$ steps of the procedure,
i.\,e.\ for $\iesco = 0\dots p-1.$ Obviously, $p_\scands \le \nescons_\scands$. So in order to prove Theorems~\ref{st:droop} and \ref{st:pjr} it will suffice to prove respectively the inequalities $p_\scands \ge \ell$ and $p_{\scands^*} \ge \ell$.

\medskip\noindent
In the following proofs consideration is limited to $\iesco = 0\dots p-1,$
for which we are ensured that $\nvotscandmax\ara>\quota.$
The candidate that is chosen in step~$s$ will be denoted $\imax\ara.$ 

\smallskip
\begin{proof}[Proof of Theorem~\ref{st:droop}]
Since $v_J = \sum_{A_k=J} v_k,$ when $\imax\ara\in J,$ equation \eqref{eq:reduction} entails that 
$$
v_J\seg \,=\, \left(1 - \frac{q}{w_*\ara}\right) v_J\ara \,\ge\, v_J\ara - q,
$$
where we used also the fact that $w_*\ara = w_{\imax\ara}\ara \ge v_J\ara.$ On the other hand, when $\imax\ara\not\in J,$
equation \eqref{eq:noreduction} entails that $v_J\seg \,=\, v_J\ara.$ Altogether we get the inequality
$$
v_J\pas{p} \,\ge\, v_J - p_J\,q.
$$
Assume now that the hypothesis $v_J > \ell\,q$ holds, as well as the negation of the conclusion, namely $p_J \le \ell - 1.$
Since we are assuming $\ell \le m_J,$ there exists $j \in J$ that remains eligible after step $p.$
This allows to write the following chain of inequalities, which contradicts the definition of $p$:
$$
w_*\atp \,\ge\,
w_j\atp \,\ge\,
v_J\atp \,\ge\,
v_J - p_J\,q \,\ge\,
v_J - (\ell-1)\,q \,>\,
q.
\qedhere
$$
\end{proof}

\smallskip
\begin{proof}[Proof of Theorem~\ref{st:pjr}]
Here we are considering $y_J = \sum_{A_k\supseteq J} v_k$
and the set $J^* = \bigcup_{A_k\supseteq J}A_k.$
An element of $J^*$ need not be contained in all of these sets $A_k$.
However, when $\imax\ara\in J^*$ equations \eqref{eq:reduction} and \eqref{eq:noreduction} still allow us to derive the inequality
\begin{align}
y_J\seg
\,&=\, \sum_{\substack{A_k\supseteq J\\k\apr\imax}}v_k\seg + \sum_{\substack{A_k\supseteq J\\k\notap\imax}}v_k\seg \nonumber\\
\,&=\, \left(1 - \frac{q}{w_*\ara}\right)\,\sum_{\substack{A_k\supseteq J\\k\apr\imax}}v_k\ara + \sum_{\substack{A_k\supseteq J\\k\notap\imax}}v_k\ara
\,\ge\, y_J\ara - q,
\end{align}
where we used the fact that $w_*\ara = w_{\imax}\ara \ge \sum_{\,k\apr\imax,\,A_k\supseteq J}v_k\ara.$
On the other hand, when $\imax\ara\not\in J^*,$ equation \eqref{eq:noreduction} ensures that $y_J\seg \,=\, y_J\ara.$
Therefore, we have
$$
y_J\pas{p} \,\ge\, y_J - p_{J^*}\,q.
$$
Analogously to the proof of Theorem~\ref{st:droop}, this inequality together with the hypothesis $v_J > \ell\,q$ and the negation of the conclusion, namely $p_{J^*} \le \ell - 1,$ contradicts the definition of $p$ because of the following chain of inequalities, where $j$ stands for any element of $J$ that remains eligible after step $p$ (such an element exists because of the hypothesis that $\ell \le m_J$):
$$
w_*\atp \,\ge\,
w_j\atp \,\ge\,
y_J\atp \,\ge\,
y_J - p_{J^*}\,q \,\ge\,
y_J - (\ell-1)\,q \,>\,
q.
\qedhere
$$
\end{proof}


\bigskip\noindent
\textit{Remark~2.1}. By using \eqref{eq:deft} instead of \eqref{eq:defp} one can deal similarly with the weaker hypothesis $v_J \ge \ell\,q,$ resp.~$y_J \ge \ell\,q,$ with the result that the inequality $n_J \ge \ell,$ resp.~$n_{J^*} \ge \ell,$ can fail only
in certain singular cases with ties that allow for several different allocations. Even then, some of these allocations do satisfy the equality $n_J = \ell,$ resp.~$n_{J^*} = \ell.$




\bigskip
\newcommand\resp[1]{\unskip\,\ \hbox{\upshape[\,resp.}\ \hbox{#1\upshape\,]}}
\begin{corollary}[Majority preservation]
\label{st:majority}%
If $n$ is odd \resp{even}, $m_J \ge (n+1)/2$ \resp{$m_J \ge n/2$} and $v_J > v/2$ \resp{$v_J \ge v/2$}, then $n_J > n/2$ \resp{$n_J \ge n/2$}.
Analogously happens with $y_J$ and $n_{J^*}$ instead of $v_J$ and $n_J.$
\end{corollary}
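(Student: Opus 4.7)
The plan is to reduce each case of the corollary directly to Theorem~\ref{st:droop} (respectively Theorem~\ref{st:pjr}) by an appropriate choice of $\ell$, using the explicit form of the Droop quota $q = v/(n+1)$.

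For the odd case I would take $\ell = (n+1)/2$. Since $n \ge 1$, this is a positive integer with $\ell \le n$, and $\ell \le m_J$ by hypothesis, so $\ell$ is admissible in Theorem~\ref{st:droop}. A direct computation gives
\[
\ell\,q \,=\, \frac{n+1}{2}\cdot\frac{v}{n+1} \,=\, \frac{v}{2},
\]
so the hypothesis $v_J > v/2$ is exactly $v_J > \ell\,q$. Theorem~\ref{st:droop} then yields $n_J \ge \ell = (n+1)/2 > n/2$.

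For the even case I would take $\ell = n/2$, which again satisfies $\ell \le n$ and $\ell \le m_J$. Now
\[
\ell\,q \,=\, \frac{n}{2}\cdot\frac{v}{n+1} \,=\, \frac{n\,v}{2(n+1)} \,<\, \frac{v}{2},
\]
so the weaker hypothesis $v_J \ge v/2$ still gives $v_J > \ell\,q$ strictly, and Theorem~\ref{st:droop} delivers $n_J \ge \ell = n/2$. The statements with $y_J$ and $n_{J^*}$ follow by the same two computations, invoking Theorem~\ref{st:pjr} in place of Theorem~\ref{st:droop}.

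There is no serious obstacle here; the only subtle point is the even case, where the assumption is only the non-strict $v_J \ge v/2$ while Theorem~\ref{st:droop} demands the strict inequality $v_J > \ell\,q$. Choosing $\ell = n/2$ (not $n/2+1$) creates a strict gap between $\ell q$ and $v/2$ precisely because the Droop quota is $v/(n+1)$ rather than $v/n$, and this gap absorbs the loss of strictness. This is also why the majority bound cannot be strengthened to $n_J > n/2$ in the even case from this argument, in line with the statement.
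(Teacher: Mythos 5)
Your proof is correct and coincides with the paper's own argument: the paper likewise takes $\ell=(n+1)/2$ in the odd case and $\ell=n/2$ in the even case and observes that the majority hypothesis then implies $v_J>\ell\,q$, so that Theorems~\ref{st:droop} and \ref{st:pjr} apply. Your additional remark about the Droop quota creating the strict gap that absorbs the non-strict hypothesis in the even case is exactly the right observation, though the paper leaves it implicit.
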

\begin{proof}
It suffices to note that the hypothesis $v_J > v/2$ \resp{$v_J \ge v/2$} implies $v_J > \ell\,q$ with $\ell = (n+1)/2 > n/2$ \resp{$\ell = n/2$}.
\end{proof}

\smallskip\noindent
\textit{Remark~2.2.} This property is not satisfied when the Hare quota is used instead of the Droop one. At the end of the 19-th century this led several swiss institutions to adopt the method of largest remainders with the quota of Droop and Hagenbach-Bischoff (see \citealt[p.\,171,\,197,\,276]{kloti:1901}, and \citealt[\S\,3.2]{kopfermann:1991}).

\smallskip\noindent
\textit{Remark~2.3.} In exchange for the property contained in the preceding corollary, one cannot avoid the possibility of having $v_J < v/2$ but $n_J > n/2.$
For instance, for $n=5$ the votes $56\,A,\ 34\,B,\ 30\,C$ result in the seats $3\,A,\ 1\,B,\ 1\,C,$
where $A$ has less than half the votes but more than half the seats.

\bigskip\smallskip
In the notation of \cite{janson:2018}, Theorems~\ref{st:droop} and \ref{st:pjr} ensure that both $\pi_{\hbox{\scriptsize\sffamily same}}(\ell,n)$ and $\pi_{\hbox{\scriptsize\sffamily PJR}}(\ell,n)$ are $\le \ell/(\nseg).$ \marginpar{$\ell\,q$} 
On the other hand, taking into account the case of two parties with respectively $\ell\,q$ and $(n+1-\ell)\,q$ votes,
one sees that both $\pi_{\hbox{\scriptsize\sffamily same}}(\ell,n)$ and $\pi_{\hbox{\scriptsize\sffamily PJR}}(\ell,n)$ are equal to the optimal value $\ell/(\nseg).$ \marginpar{$\ell\,q$} 

\bigskip\noindent
\textit{Remark~2.4.} For Thiele's global optimization method (see \S\ref{sec:thiele}) it has been shown that in the conditions of Theorem~\ref{st:pjr} one is ensured the following additional property \cite[Theorem 7.6]{janson:2018}: There exist some electors~$k$ such that $A_k \supseteq J$ and $A_k$ contains at least $\ell$ elected candidates. This property does not hold for Phragmén's iterative minimax method \cite[Example 7.4]{janson:2018}.
Neither does it hold for the method that we are studying here,
a counterexample being, for instance, the following:
$$21\,a\,b\,c_1,\ 21\,a\,b\,c_2,\ 22\,c_1\,c_2\,c_3,\ 1\,c_1\,c_3,\ 15\,c_3.\quad n=3.$$
The seats are allocated to $c_1, c_2$ and $c_3$, in this order.
The above-mentioned property is not satisfied by the set $J=\{a,b\},$ for which $y_J = 42 > 2 q$

\bigskip\noindent
\textit{Remark~2.5.} Theorem~\ref{st:droop} does not hold for variant~\ref{pg:simple_fractions}.
A counterexample is the following: $95\,A,\ 79\,B,\ 75\,C,\ 7\,AB,\ 56\,AC$ with $n=3.$
As one can check, party~$B$ gets no seat in spite of the fact that it has more than a quota.

\bigskip\noindent
\textit{Remark~2.6.} The following example, taken from \cite[p.\,14]{phragmen:1906b},
shows that Theorem~\ref{st:pjr} does not hold if the hypothesis $\ell \le m_J$ is replaced by $\ell \le m_{J^*}$:
$$
120\ a_1\,a_2\,a_3\,a_4\,a_5,\quad 86\ b_1\,b_2\,b_3\,b_4,\quad 24\ a_2\,a_3\,a_4\,a_5.
$$
Let $n=7.$ The quota is $q = 230/8 = 28.75.$ The seats get allocated to $a_2,a_3,a_4,b_1,a_5,b_2$ and $b_3$, in this order.
For $J=\{a_2,a_3,a_4,a_5\}$ one has $J^*=\{a_1,a_2,a_3,a_4,a_5\}$ and $y_J = 144 > 5\,q.$ However, $n_{J^*} = 4 < 5.$

\section{Different kinds of monotonicity,\\or the lack of it}
\label{sec:mono}

\subsection{Monotonicity for individual candidates}

Let us consider the case of individual candidates
and a variation of the votes whereby
one candidate gets more approvals than before,
the other candidates keeping exactly the same approvals as before.
In such a situation one certainly expects the following property:
if that candidate was allocated a seat before the variation,
he will also be allocated a seat after it.
We refer to this property as \textbf{monotonicity for individual candidates}.

\begin{theorem}
Monotonicity for individual candidates holds.
\end{theorem}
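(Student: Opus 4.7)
The plan is to compare, step by step, the procedure's run under the original votes and under the varied ones---call them $R$ and $R'$---with tie-breaking chosen in favour of $i_0$ whenever possible. Any admissible variation can be realised as a finite composition of two kinds of elementary moves, so it suffices to treat each separately:
(a)~a single voter of some non-empty approval set $A$ with $i_0\notin A$ changes its set to $A\cup\{i_0\}$, so that the total $v$ and quota $q$ are unchanged;
(b)~a new voter of type $\{i_0\}$ is added, so that $v$ grows by one and $q=v/(\nseg)$ is replaced by $q'=q+1/(\nseg)$.

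For case~(a) I~would prove, by induction on the step counter~$s$, the following dichotomy: \emph{either $i_0$ has already been allocated a seat in $R'$, or $R$ and $R'$ have allocated the same candidates in the same order and agree on every $v_k\ara$}. The inductive step is immediate: if $i_0$ is the unique maximiser of $w_i'\ara$ then $R'$ picks~$i_0$ and the first alternative holds; otherwise any maximiser $j\neq i_0$ in $R'$ is also a maximiser in $R$ (because $w_j\ara=w_j'\ara$ for $j\neq i_0$ and $w_{i_0}'\ara\ge w_{i_0}\ara$) and may be chosen by both, and the common reduction factor $1-q/w_j\ara$ is applied to the same voter types in the two runs, since $j\neq i_0$ implies that $\{k:j\in A_k\}$ is unaffected by the move. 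Since $R$ does eventually elect~$i_0$, the second alternative cannot hold throughout, so $i_0$ is elected in $R'$ too.

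Case~(b) is the main obstacle: the quota itself moves, so the reduction factors $1-q/w_*\ara$ in $R$ and $1-q'/w_*\ara$ in $R'$ differ from the first common selection onwards, and the literal coupling of case~(a) collapses. My strategy is to exploit the new voter, whose vote value stays at~$1$ as long as $i_0$ has not been elected in~$R'$, in order to keep $i_0$'s position in $R'$ ahead of its position in~$R$. A direct computation at step~$1$, assuming both runs elect the same $j\neq i_0$ (which may be enforced by tie-breaking, since the addition of the new voter only strengthens $i_0$ at step~$0$), yields
\begin{equation*}
w_{i_0}'\pas{1}-w_{i_0}\pas{1}\,=\,1-\frac{1}{\nseg}\cdot\frac{\sum_{k\,\apr\,i_0,\,k\,\apr\,j}v_k}{w_j\ini}\,\ge\,\frac{n}{\nseg}\,>\,0,
\end{equation*}
while $w_i'\pas{1}\le w_i\pas{1}$ for every $i\neq i_0$. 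The aim would then be to propagate, by induction on~$s$ and with tie-breaking in favour of~$i_0$, the pair of inequalities $w_{i_0}'\ara\ge w_{i_0}\ara$ and $w_i'\ara\le w_i\ara$ for every eligible $i\neq i_0$; granted these, at the step at which $R$ would elect~$i_0$ the candidate $i_0$ is automatically a maximiser in $R'$ and is elected there as well. The principal technical difficulty---and the step I expect to require the most care---is to preserve this pair of inequalities through steps at which $R$ and $R'$ elect different candidates, where the bound $q'-q=1/(\nseg)$ and the still unspent value of the new voter must be balanced against the two distinct reductions.
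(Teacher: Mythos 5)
Your case~(a) is, modulo packaging, exactly the paper's proof: one couples the original and the modified runs and shows by induction that, as long as $i_0$ has not been elected in the modified run, the two runs can elect the same candidates with identical vote values (so $\tilde w_j\ara = w_j\ara$ for $j\neq i_0$ and $\tilde w_{i_0}\ara \ge w_{i_0}\ara$), whence at the step $t$ where the original run elects $i_0$ that candidate is also a maximiser in the modified run. Your one-voter-at-a-time decomposition, and your remark that the invariant must be carried at the level of the individual $v_k\ara$ rather than only of the $w_j\ara$, are harmless (indeed slightly more careful) refinements of the same argument; the paper handles all the added approvals in a single pass.

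The divergence is your case~(b), and there the proposal has a real gap --- but one that lies outside what the paper actually proves. The paper's coupling uses that the reduction factor $1-q/w_*\ara$ is \emph{literally the same} in the two runs whenever the same candidate is elected, which presupposes that the electorate, hence $v$ and the quota $q = v/(\nseg)$, is unchanged; consistently, the paper sets the addition of a previously abstaining elector aside as ``a related but slightly different phenomenon'' (see its Example~3.2, where for party lists this very perturbation breaks monotonicity). So the theorem is to be read with a fixed set of electors, and case~(a) alone is a complete proof of it. Your treatment of case~(b), by contrast, is not a proof: the step-$1$ computation is correct, but the pair of inequalities $w_{i_0}'\ara\ge w_{i_0}\ara$ and $w_i'\ara\le w_i\ara$ ($i\neq i_0$) is precisely the hard part, and it is not clear that it survives the first step at which the two runs elect different candidates --- a candidate elected earlier in $R'$ than in $R$ depresses its supporters at a different stage of the two evolutions, and the two factors $q/w_*\ara$ and $q'/\tilde w_*\ara$ are not comparable in a fixed direction. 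If you restrict the admissible variations to those with a fixed electorate, drop case~(b), and keep case~(a), you have the paper's proof.
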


\begin{proof}
In the following we use a tilde to mean what happens after the votes have been varied. 
Let $i$ be the candidate that gets additional approvals.
So $\tilde w_i > w_i,$ whereas $\tilde w_j = w_j$ for any $j\neq i.$
Let us assume that for the original votes $i$ is elected when $s$ takes the value $t$.
Consider now the modified votes. We claim that either $i$ is elected for some $s<t$,
or it is elected for $s=t$. In other words, the assumption that $i$ is not elected for any $s\le t-1$
implies that it is elected for $s=t$. In fact, that assumption entails the following facts for $s\le t$, which are obtained by induction: $\tilde w_i\ara \ge w_i\ara$; $\tilde w_j\ara = w_j\ara$ for any $j\in I\ara\setminus\{i\}$; the elected candidates are the same as with the original votes. This gives the desired result since $\tilde w_i\pas{t} \ge w_i\pas{t} \ge w_j\pas{t} = \tilde w_j\pas{t}$ for any $j\in I\pas{t}\setminus\{i\}.$
\end{proof}

\bigskip\noindent
\textit{Remark 3.1.}
In the event of ties it can happen that both the original votes and the modified ones admit one allocation where $i$ is elected and another one it is not. In such a case, depending on which allocation is chosen before and after the modification it may give the wrong feeling of a lack of monotonicity for individual candidates.

\bigskip\noindent
\textit{Remark 3.2.}
Monotonicity for individual candidates does not hold for variant~\ref{pg:negative_numbers}. 
An example is provided by the following votes: $7\,a,\ 3\,b,\ 2\,c,\ 1\,d_i$ ($i=1..18$).
Consider the election of 2~representatives. The quota is $q=10.$
As one can check, variant~\ref{pg:negative_numbers} results in the election of $a$ and $b.$
However, if the $7\,a$ votes are replaced by $4\,a$ plus $3\,ab,$ then $a$ and $c$ are elected instead of $a$ and~$b.$

%
%

\subsection{Lack of monotonicity for party lists}

Let us consider now the analogous situation with party lists instead of individual candidates.
So the votes undergo a variation where one of these party lists gets additional approvals
whereas the other party lists keep exactly the same approvals as before.
In such a situation one would expect that party to keep at least the same number of seats.
However, it is not always so:

\bigskip
\noindent
\textit{Exemple 3.1.}
Consider the election of 3~representatives from 3~party lists $A,\,B,\,C,$
with the following approval votes:
$$5\ A,\xxsep 4\ B,\xxsep 6\ A\,C,\xxsep  4\ BC.$$
As one can check, the Eneström-Phragmén procedure results in the seats being allocated successively to $A,\,B$ and $A.$
Now, if one of the votes that presently approve only $B$ is changed to approve also $A,$
the seats are then allocated successively to $A, C$ and $B.$


\bigskip
The following example illustrates a related but slightly different phenomenon
where the additional approval happens in an elector that was previously abstaining
(so the number of votes, and therefore the quota, get changed):

\bigskip\noindent
\textit{Exemple 3.2}.
Consider again the election of 3~representatives from 3~party lists $A,\,B,\,C,$
the initial votes being now
$$5\ A,\ 3\ B,\ 3\ AB,\ 8\ AC,\ 7\ BC.$$
As one can check, the seats are again allocated successively to $A,\,B$ and $A.$
Let us now add one vote that approves only $A.$
After this modification, the seats are allocated successively to $A,\,C$ and $B.$

\bigskip
These phenomena can be justified by arguing that
the Eneström-Phragmén procedure aims only at
electing a good set of representatives
in the spirit of proportional representation,
that is, in being well distributed among the electors.
Since the votes express only approval,
the issue whether an elector is represented by one or another candidate
is irrelevant as long as both of them have the approval of that elector.



\subsection{Lack of house monotonicity}

On the other hand, the procedure is aimed at a particular total number of seats, whose value determines the quota.
So it is not a surprise to see that going from $n$ to $\nseg$ seats
does not always amount to simply adding one candidate.
For instance, example~1.1 with $n=3$ results in electing successively $a,\,u$ and $x.$
But the same votes with $n=4$ result in electing successively $a,\,u,\,b$ and $z.$

%

\section{Asymptotic behaviour as $n\rightarrow\infty$ in the case of two parties}

\medskip
In this section we assume that there are only two parties, $A$ and $B$
---so every elector approves either $A$ or $B$ or both of them---
and we ask ourselves about the asymptotic behaviour of $n_A/n$ and $n_B/n$ as $n\rightarrow\infty$
and its dependence on the number of votes of each kind, $v_A, v_B, v_{AB}$.
Since the electors that approve both $A$ and $B$ are indifferent between all proportions of seats between $A$ and $B,$
the ideal behaviour is to reproduce the proportion between $v_A$ and $v_B,$ that is to have $\lim_{n\rightarrow\infty}n_A/n = v_A/(v_A+v_B)$.

\smallskip
For Phragmén's iterative minimax method \cite{mora:2015} noticed that this desirable behaviour is far from being satisfied;
instead, the dependence of $\lim_{n\rightarrow\infty}n_A/n$ on $v_A, v_B, v_{AB}$ exhibits a ``devil's staircase'' character,
which phenomenon has been mathematically dissected by \cite{janson:2017}.
In this section we will see that the method of Eneström and Phragmén is better behaved.

\medskip
We will make use of the following notation:
\begin{align}
\alpha\ara = v_A\ara/v,\quad \beta\ara &= v_B\ara/v,\quad \zeta\ara = v_{AB}\ara/v;\\[2pt]
\rho = q/v &= 1/(\nseg).
\end{align}
For~$\alpha\ini=\beta\ini=0$ it is easily seen that the procedure allows for any seat distribution between $A$ and $B$.
For~$\alpha\ini=0<\beta\ini$ ---resp.\,$\beta\ini=0<\alpha\ini$--- it is also easily seen that all seats are given to $B$ ---resp.\,$A$---.
\textit{So in the following developments we will assume $\alpha\ini,\,\beta\ini>0.$}
\label{standingassumption}
Concerning $\zeta\ini$ sometimes we will deal separately with the cases $\zeta\ini>0$ and $\zeta\ini=0$
(the latter case has been considered in Section~\ref{sec:uninominal}).

Instead of $\alpha\ara,\,\alpha\seg,\,\alpha[s\kern-.05em+\kern-.05em2],$
next we will write simply  $\alpha,\,\alpha',\,\alpha''$,
and similarly with $\beta$ and $\zeta.$
As it is shown in next lemma, we are always in the case $w_*\ara > q$
and the seat allocations spend always a whole quota.
More specifically, for $s\le n-1$
the values of $(\alpha',\beta',\gamma')$ remain positive
and they are related to $(\alpha,\beta,\gamma)$ in the following way: 
\begin{equation}
\label{eq:cases}
\begin{cases}
\,\alpha' = \left(1 - \dfrac\rho{\alpha+\zeta}\right)\,\alpha,\quad \beta' = \beta,\quad
\zeta' = \left(1 - \dfrac\rho{\alpha+\zeta}\right)\,\zeta,
&\text{if $\alpha > \beta$;}\\[12pt]
\,\alpha' = \alpha,\quad \beta' = \left(1 - \dfrac\rho{\beta+\zeta}\right)\,\beta,\quad
\zeta' = \left(1 - \dfrac\rho{\beta+\zeta}\right)\,\zeta,
&\text{if $\alpha < \beta$.}
\end{cases}
\end{equation}
In the first of these two cases the seat at stake is allocated to $A$ and in the second case it is allocated to $B.$
For $\alpha = \beta$ one is allowed to choose between both possibilities.


\begin{lemma}
\label{st:lemma1}
Assume that $\zeta\ini>0.$
The following facts take place:
\begin{alignat}{2}
\label{eq:ac}
&\max\,(\alpha + \zeta,\,\beta + \zeta) \,>\, \rho&\qquad&\forall s\le n-1,\\[2pt]
\label{eq:eqs}
&(\alpha',\beta',\gamma') \text{ are related to } (\alpha,\beta,\gamma) \text{ by \eqref{eq:cases}}
&\qquad&\forall s\le n-1,\\[2pt]
\label{eq:abc}
&\alpha + \beta + \zeta \,=\, 1 - s\,\rho \,=\, (n+1-s)\,\rho,&\qquad&\forall s\le n,\\[2pt]
\label{eq:pos}
&\alpha, \beta, \zeta \,>\, 0,&\qquad&\forall s\le n.
\end{alignat}
\end{lemma}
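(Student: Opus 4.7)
The natural approach is joint induction on $s$, carrying \eqref{eq:abc} and \eqref{eq:pos} up to $s=n$ and \eqref{eq:ac} and \eqref{eq:eqs} up to $s=n-1$. For the base case $s=0$ the equality $\alpha\ini+\beta\ini+\zeta\ini=1=(n+1)\rho$ holds by the definitions of $\alpha,\beta,\zeta$ and $\rho$, which gives \eqref{eq:abc}; \eqref{eq:pos} combines the standing assumption $\alpha\ini,\beta\ini>0$ with the hypothesis $\zeta\ini>0$; and \eqref{eq:ac} follows from the crude bound $\max(\alpha+\zeta,\beta+\zeta)\ge(\alpha+\beta+2\zeta)/2=(1+\zeta)/2>1/2\ge\rho$. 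Finally, \eqref{eq:eqs} at $s=0$ is just the specialization of \eqref{eq:reduction}--\eqref{eq:noreduction} to the two-party approval structure, in which the voters approving the winner $\imax\in\{A,B\}$ are exactly those of type ``only~$\imax$'' and those of type ``$AB$''.

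For the inductive step, suppose all four conclusions hold at $s\le n-1$. By \eqref{eq:ac} we have $w_*\ara>q$, so the full-quota branch \eqref{eq:reduction} applies (not \eqref{eq:exhausted}), and substituting the two-party structure into it yields precisely the two cases of \eqref{eq:cases}. Reading off those formulas, the reductions in $\alpha$ and $\zeta$ (or in $\beta$ and $\zeta$) sum to exactly $\rho$, so $\alpha'+\beta'+\zeta'=\alpha+\beta+\zeta-\rho$, which transports \eqref{eq:abc} from $s$ to $s+1$. The reduction factor $1-\rho/(\alpha+\zeta)$ (or $1-\rho/(\beta+\zeta)$) lies strictly in $(0,1)$ by \eqref{eq:ac} and \eqref{eq:pos} at $s$, so $\alpha',\beta',\zeta'$ remain strictly positive, giving \eqref{eq:pos} at $s+1$.

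The one step that requires a genuine estimate is carrying \eqref{eq:ac} forward when $s+1\le n-1$. Using $\max\ge$ average, it suffices to check that $(\alpha'+\beta'+2\zeta')/2>\rho$, i.\,e.\ that $\zeta'>(s+3)\rho-1$. For $s\le n-2$ the right-hand side satisfies $(s+3)\rho-1\le(n+1)\rho-1=0$, while the left-hand side is strictly positive by \eqref{eq:pos} at $s+1$. I expect this to be the main obstacle to keep track of, and it is also the only place where the hypothesis $\zeta\ini>0$ is essential: it is exactly what guarantees $\zeta'>0$ throughout and thereby prevents the maximum from collapsing to $\rho$ before the procedure terminates. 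Once \eqref{eq:ac} is in hand at $s+1$, the derivation of \eqref{eq:eqs} at $s+1$ is identical to the derivation in the base case, closing the induction.
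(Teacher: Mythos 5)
Your proof is correct and follows essentially the same route as the paper's: base case at $s=0$, propagation of \eqref{eq:abc} and \eqref{eq:pos} via \eqref{eq:cases}, and an averaging/counting argument showing $\max(\alpha'+\zeta',\beta'+\zeta')>\rho$ for $s\le n-2$ using the sum identity and $\zeta'>0$ (the paper phrases this last step contrapositively, deriving $\beta'\ge(n-s-1)\rho\ge\rho$ and hence $\beta'+\zeta'>\rho$, which is the same computation). Your identification of where $\zeta\ini>0$ is essential matches the paper's use of it.
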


\begin{proof}
For $s=0$ \eqref{eq:abc} holds because of the definition of $\alpha,\beta,\zeta$,
and \eqref{eq:pos} is a standing hypothesis.
Still for $s=0,$ \eqref{eq:ac} holds because the contrary inequality, namely having both $\alpha + \zeta\le\rho$ and $\beta + \zeta\le\rho$ would imply $\alpha + \beta + \zeta < \alpha + \beta + 2\,\zeta \le 2\rho$, i.\,e.~$(\nseg)\rho < 2\rho$, i.e. $n < 1$.

On the oher hand, for any $s,$ it is easily checked that \eqref{eq:ac} implies \eqref{eq:cases} and that the resulting $(\alpha',\beta',\gamma')$ satisfy \eqref{eq:abc} (with $s$ replaced by $s+1$) and \eqref{eq:pos}.

So it only remains proving that \eqref{eq:ac} gets reproduced as long as $s\le n-1.$
Let $s$ be the last time that \eqref{eq:ac} holds and assume that $s\le n-2.$
As we have just seen, this implies $\alpha' + \beta' + \zeta' = (n-s)\rho$ and $\alpha'\!,\,\beta'\!,\,\zeta' > 0.$
But we are assuming that $\max\,(\alpha' + \zeta',\,\beta' + \zeta') \le \rho.$
Without loss of generality, we can assume also that $\max\,(\alpha' + \zeta',\,\beta' + \zeta') = \alpha' + \zeta'.$
By combining these facts with the known value of
$\alpha' + \beta' + \zeta',$ we get $\beta' = (n-s)\rho - (\alpha' + \zeta') \ge (n-s-1)\rho \ge \rho,$
where we have used the assumption that $s\le n-2.$ On~account of the inequality $\zeta' > 0,$ it follows that $\beta' + \zeta' > \rho,$ in contradiction with the above assumption that $\max\,(\alpha' + \zeta',\,\beta' + \zeta') \le \rho.$
%
%
\end{proof}

\begin{lemma}
\label{st:lemma2}
Assume that $\zeta\ini>0.$
If seat $s+1$ is allocated to $A$ and seat $s+2$ is allocated to $B,$
then seat $s+3$ is necessarily allocated to $A$ (whenever $s+3\le n$).
Analogously happens for $A$ interchanged with $B.$ 
\end{lemma}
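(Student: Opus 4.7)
The plan is to compute $\alpha'' - \beta''$ directly from \eqref{eq:cases} and show it is strictly positive. Since the allocation criterion compares $\alpha + \zeta$ with $\beta + \zeta$ and the common $\zeta''$ summand cancels, this strict inequality is exactly what forces seat $s+3$ to go to $A$. I will denote the starting state (just before seat $s+1$) by $(\alpha, \beta, \zeta)$, with primed and double-primed variables for the states just after seats $s+1$ and $s+2$.

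First, translate the two hypotheses into inequalities on the initial data. Writing $d := \alpha - \beta$, the allocation of seat $s+1$ to $A$ gives $d \ge 0$. Substituting the $A$-branch of \eqref{eq:cases} and then writing down the condition $\beta' \ge \alpha'$ needed for seat $s+2$ to go to $B$, a short manipulation reduces this second condition to
$$
\epsilon \,:=\, \frac{\rho\,\alpha}{\alpha + \zeta} \,-\, d \,\ge\, 0.
$$

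The main step is to apply \eqref{eq:cases} a second time (the $B$-branch, starting from $(\alpha', \beta', \zeta')$) and simplify $\alpha'' - \beta''$. Direct expansion gives
$$
\alpha'' - \beta'' \,=\, d \,+\, \rho \cdot \frac{\beta(\alpha + \zeta) - \alpha(\beta + \zeta')}{(\alpha + \zeta)(\beta + \zeta')}.
$$
The one delicate point, which I expect to be the main obstacle, is the algebraic identity that the numerator here collapses, using the explicit value $\zeta' = \zeta\,(1 - \rho/(\alpha + \zeta))$, to exactly $\zeta\,\epsilon$. Once that cancellation is recognized, the expression becomes
$$
\alpha'' - \beta'' \,=\, d \,+\, \frac{\rho\,\zeta\,\epsilon}{(\alpha + \zeta)(\beta + \zeta')}.
$$

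Both summands are non-negative: $d \ge 0$ is the first translated hypothesis, $\epsilon \ge 0$ is the second, and the denominators are positive by Lemma~\ref{st:lemma1}. To upgrade non-negativity to strict positivity, observe that
$$
d + \epsilon \,=\, \frac{\rho\,\alpha}{\alpha + \zeta} \,>\, 0,
$$
because $\alpha > 0$ by the standing assumption together with \eqref{eq:pos}, and $\alpha + \zeta$ and $\rho$ are positive. Hence at least one of $d$ or $\epsilon$ is strictly positive, so $\alpha'' - \beta'' > 0$, and seat $s+3$ must be allocated to $A$. The case with $A$ and $B$ interchanged is literally symmetric and needs no separate argument.
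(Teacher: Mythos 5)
Your proof is correct, and it closes the argument by a slightly different route than the paper. The paper works with the ratio $\alpha''/\beta''$: expanding the recursion twice gives $\alpha''/\beta'' = \frac{1-\rho/(\alpha+\zeta)}{1-\rho/(\beta'+\zeta')}\,\frac{\alpha}{\beta}$, and since $\alpha/\beta\ge1$ it suffices to show the prefactor exceeds $1$, which reduces to the single inequality $\beta'+\zeta'<\alpha+\zeta$; this in turn follows from $\beta\le\alpha$ and $\zeta>0$ alone (the hypothesis that seat $s+2$ goes to $B$ is used only to fix which branch of \eqref{eq:cases} applies). You instead work with the difference and obtain the identity $\alpha''-\beta''=d+\rho\,\zeta\,\epsilon/\bigl((\alpha+\zeta)(\beta+\zeta')\bigr)$, where $d\ge0$ and $\epsilon\ge0$ are exactly the slacks in the two allocation hypotheses and $d+\epsilon=\rho\alpha/(\alpha+\zeta)>0$; I checked the cancellation $\beta\zeta-\alpha\zeta'=\zeta\epsilon$ and it is right. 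Your decomposition is arguably more transparent about how each hypothesis enters, at the cost of a slightly longer computation; the paper's ratio argument is shorter but hides the role of the second hypothesis. One small point to make explicit: in the case $d=0$, $\epsilon>0$, strict positivity of your second summand needs $\zeta>0$, which comes from \eqref{eq:pos} under the standing assumption $\zeta\ini>0$ --- you cite \eqref{eq:pos} only for $\alpha>0$, so add a word for $\zeta$ as well (the paper's proof leans on the same fact at the same spot).
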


\begin{proof}
Giving the seat $s+1$ to $A$ and seat $s+2$ to $B$ implies
\begin{gather}
\alpha' = (1 - \frac\rho{\alpha+\zeta})\,\alpha,\quad \beta' = \beta,\quad \zeta' = (1 - \frac\rho{\alpha+\zeta})\,\zeta;
\\[5pt]
\alpha'' = \alpha',\quad \beta'' = (1 - \frac\rho{\beta'+\zeta'})\,\beta',\quad \zeta'' = (1 - \frac\rho{\beta'+\zeta'})\,\zeta'.
\end{gather}
Besides, we are ensured that $\alpha/\beta \ge 1$ (and $\alpha'/\beta' \le 1$).
Our claim will be proved if we show that $\alpha''/\beta'' > \alpha/\beta$
(which implies $\alpha''/\beta'' > 1).$
Since
$$
\frac{\alpha''}{\beta''} \,=\, \frac{1}{(1-\rho/(\beta'+\zeta'))}\,\frac{\alpha'}{\beta'} \,=\, \frac{(1-\rho/(\alpha+\zeta))}{(1-\rho/(\beta'+\zeta'))} \,\frac{\alpha}{\beta},
$$
it suffices to check that $\displaystyle \frac{(1-\rho/(\alpha+\zeta))}{(1-\rho/(\beta'+\zeta'))} > 1.$ But this amounts to see that
$$
\beta' + \zeta' < \alpha + \zeta,
$$
i.\,e.
$$
\beta + (1 - \frac\rho{\alpha+\zeta})\,\zeta \,<\, \alpha + \zeta,
$$
which holds since $\beta\le\alpha$ and $\zeta > 0$ (because of \eqref{eq:pos}).
\end{proof}

\medskip\noindent
\textit{Remark 4.1.}
The preceding computations show that if $\alpha/\beta = 1$ and seat $s+1$ is given to $A$, \emph{then} $\alpha'/\beta'= (1-\rho/(\alpha+\zeta))\,\alpha/\beta<1$ and seat $s+2$ is given necessarily to $B$ 
(whenever $s+2\le n$).
Besides, $\alpha''/\beta'' > \alpha/\beta = 1$ and seat $s+3$ is given necessarily to $A$ (whenever $s+3\le n$). And, continuing with the same argument, $\alpha'''/\beta''' < 1,$ $\alpha''''/\beta'''' > 1$ and so on. So in the case $\zeta\ini>0$ a tie of the form $\alpha/\beta = 1$ can happen only once.

\begin{lemma}
\label{st:lemma3}
Assume that $\zeta\ini=0$ and that seat $s+1$ is the matter of a tie, i.\,e.~$\alpha=\beta$. If that seat is allocated to $A$ then seat $s+2$ is allocated to $B$ (whenever $s+2\le n$) and seat $s+3$ is again the matter of a tie (whenever $s+3\le n$). Analogously happens for $A$ interchanged with $B.$ 
\end{lemma}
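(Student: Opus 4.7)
My plan is to exploit the observation that $\zeta\ini=0$ propagates forever: both \eqref{eq:reduction} and \eqref{eq:cases} multiply $\zeta$ by a nonnegative factor, so $\zeta\ara=0$ for every~$s$. This collapses the update to a very simple rule: the party that wins a seat loses exactly~$\rho$, and the other party is unchanged, \emph{provided} the full-quota branch $w_*\ara>q$ of~\eqref{eq:reduction} governs the step rather than the exhausted branch~\eqref{eq:exhausted}. First I would verify, by the same induction as in the proof of Lemma~\ref{st:lemma1} (which is actually shorter here because $\zeta$ has been removed), that the analogue of~\eqref{eq:abc} holds, namely $\alpha+\beta=(n+1-s)\rho$ for every $s\le n$.

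With this invariant in hand, the hypothesis $\alpha=\beta$ at step~$s+1$ yields $\alpha=\beta=(n+1-s)\rho/2$. For the statement about seat~$s+3$ to be non-vacuous one needs $s\le n-2$ (and even $s\le n-3$ if one insists on seat~$s+3$ being actually allocated), which then gives $\alpha=\beta\ge 3\rho/2>\rho$ and confirms that the full-quota branch applies at step~$s+1$. Allocating that seat to~$A$, the simplified rule gives $\alpha'=\alpha-\rho,\ \beta'=\beta=\alpha$, so $\beta'-\alpha'=\rho>0$; hence $B$ is strictly ahead at step~$s+2$ and is forced to receive it. Since $\beta'=\alpha\ge 3\rho/2>\rho$, the full-quota branch also applies at step~$s+2$, producing $\alpha''=\alpha'=\alpha-\rho=\beta-\rho=\beta''$, which is the tie announced for step~$s+3$. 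The ``analogously happens'' clause with $A$ and $B$ swapped is verbatim the same argument.

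The main obstacle is purely bookkeeping: one has to confirm that at both decisive steps we remain in the full-quota branch of~\eqref{eq:reduction} (so that the increment is exactly $\rho$, not the entire residual support), and one has to state and prove the $\zeta\ini=0$ version of the invariant~\eqref{eq:abc}. Both are immediate consequences of $\zeta\ara\equiv 0$ together with the inductive bound $\alpha+\beta=(n+1-s)\rho\ge 2\rho$ valid throughout the relevant range of~$s$. No delicate estimate appears; the lemma reduces to a two-line symbolic computation once the $\zeta=0$ simplification is made.
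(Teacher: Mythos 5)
Your proof is correct and follows essentially the same route as the paper's: a direct computation showing $\alpha'=\alpha-\rho,\ \beta'=\beta$ forces $B$ to win seat $s+2$ and restores the tie $\alpha''=\beta''$. The paper phrases it via the ratio $\alpha'/\beta'$ and dismisses the degenerate case $\alpha=\beta=\rho$ by noting it occurs only at $s=n-1$, whereas you rule it out via the invariant $\alpha+\beta=(n+1-s)\rho$; this is only a difference in bookkeeping, not in substance.
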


\begin{proof}
It suffices to check that $\alpha'/\beta'= (\alpha-\rho)/\beta<1$ and that $\alpha''/\beta''= (\alpha-\rho)/(\beta-\rho)=1.$
The only problem could be having $\alpha=\rho$ and $\beta=\rho,$ but this means that each of the previous seats has been given in exchange of a whole quota $\rho$; since $(n+1)\,\rho=1$ and $\alpha+\beta=2\rho,$ it follows that $n-1$ seats have already been given, so that seat $s+1$ is already the last one. 
\end{proof}

Both in the case $\zeta\ini>0$ and in the case $\zeta\ini=0$ the preceding facts have the following consequence:

\begin{corollary}
\label{st:corollary}
Let $k$ be the largest non-negative integer such that the first $k$ seats are allocated all of them to the same party without ties (if $\alpha=\beta,$ then $k=0$). The remaining $n-k$ seats are then divided between the two parties either equally or with a difference of one seat.
\end{corollary}


Our aim is now to compute the limits of $n_A/n$ and $n_B/n$ as $n\rightarrow\infty.$
This computation will rely on estimating the value of the integer $k$
whose definition is contained in the preceding corollary.
Without loss of generality, we can assume that the first seat is allocated to party $A.$


In the remainder of this section we switch back to denoting the values at step $s$ by $\alpha\ara,\,\beta\ara,\,\zeta\ara$,
and the initial ones as $\alpha,\,\beta,\,\zeta$ ($\alpha+\beta+\zeta=1$). The hypothesis that the first seat is allocated to $A$ implies that $\alpha\ge\beta$.

According to its definition,
$k$ is the first integer such that $\alpha[k] \le \beta[k].$
Its value can be computed in the following way.
To start with, the fact that the first $k$ seats are all allocated to $A$ ensures that
\begin{equation}
\label{eq:krho}
\alpha[k]+\zeta[k] = \alpha +\zeta - k\,\rho.
\end{equation}
Each of these seats entails a reduction factor that affects all the ballots that approve of $A,$ either alone or together with $B$. Trom this it follows that
\begin{equation}
\label{eq:proporcio}
\dfrac{\alpha[k]}{\alpha} = \dfrac{\alpha[k] +\zeta[k]}{\alpha+\zeta}.
\end{equation}
Combining \eqref{eq:krho} and \eqref{eq:proporcio} results in
\begin{equation}
\alpha[k] = \left(1-\dfrac{k\,\rho}{\alpha+\zeta}\right)\,\alpha.
\end{equation}
On the other hand, we know that $\beta[k] = \beta.$
These equalities determine the value of the first integer $k$ for which $\alpha[k] \le \beta[k],$ namely
\begin{equation}
k = \left\lceil \dfrac{(\alpha-\beta)(\alpha+\zeta)}{\alpha\,\rho} \right\rceil,
\end{equation}
where $\lceil x\rceil$ means the smallest integer larger than or equal to $x$.
Since $\rho = 1/(\nseg)$, we get
\begin{equation}
\lim_{n\rightarrow\infty}\dfrac{k}{n} = \dfrac{(\alpha-\beta)(\alpha+\zeta)}{\alpha}.
\end{equation}

Finally, it only remains to take into account that $k$ is related to $n_A$ in the following way:
$n_A = k + (n-k)/2$ for $n-k$ even, $n_A = k + (n-k\pm1)/2$ for $n-k$ odd
(the `$+$' sign happens only in the case of ties).
From these facts it follows that
\begin{equation}
\lim_{n\rightarrow\infty}\dfrac{n_A}{n} = \onehalf \left(1+\dfrac{(\alpha-\beta)(\alpha+\zeta)}{\alpha}\right),
\qquad\text{for $\alpha \ge \beta.$}
\end{equation}
The preceding constraint $\alpha \ge \beta,$ corresponds to the assumption that we have made that the first seats are allocated to $A$.
For the contrary case, the analogous formula for $\lim_{n\rightarrow\infty}\,(n_B/n)$ amounts to
\begin{equation}
\lim_{n\rightarrow\infty}\dfrac{n_A}{n} = \onehalf \left(1-\dfrac{(\beta-\alpha)(\beta+\zeta)}{\beta}\right),
\qquad\text{for $\alpha \le \beta.$}
\end{equation}
In particular, both formulas coincide in giving $\lim_{n\rightarrow\infty}\,(n_A/n) = 1/2$ for $\alpha=\beta.$
On the other hand, for $\zeta=0$ both of them become $\lim_{n\rightarrow\infty}\,(n_A/n) = \alpha$
(since $\alpha+\beta=1$).

Figure~\ref{fig:nocantor} shows the way that the limit depends on $\alpha$ for a fixed value of $\zeta$
(in which case $\beta = 1 - \zeta - \alpha$).

\vskip-10mm 

\begin{figure}[H]
\centering
\begin{tikzpicture}[x=90mm,y=90mm,>=stealth]
\draw (0,0) -- (1,0);
\node at (0.85,-4mm) {$\alpha$};
\draw (0,0) -- (0,1);
\draw (1,0) -- (1,1);
\draw (0,1) -- (1,1);
\draw (-1.5mm,0) -- (0,0);
\node at (-4mm,0) {$0$};
\draw (-1.5mm,1) -- (0,1);
\node at (-4mm,1) {$1$};
\draw (0,-1.5mm) -- (0,0);
\node at (0,-4mm) {$0$};
\draw (1,-1.5mm) -- (1,0);
\node at (1,-4mm) {$1\!-\!\zeta$};
\def\z{0.376}
\draw[help lines,domain=0:1,variable=\xx,dashed] plot ({\xx},{\xx});
\draw[domain=0.5:1,variable=\xx] plot ({\xx},{0.5 + ((1-\z)*\xx+\z)*(\xx-0.5)/\xx});
\draw[domain=0:0.5,variable=\xx] plot ({\xx},{0.5 - ((1-\z)*(1-\xx)+\z)*(0.5-\xx)/(1-\xx)});
\end{tikzpicture}%
\caption{Dependence of $\lim_{n\rightarrow\infty}(n_A/n)$ with respect to $\alpha$ for $\zeta = 0.376.$}
\label{fig:nocantor}
\end{figure}
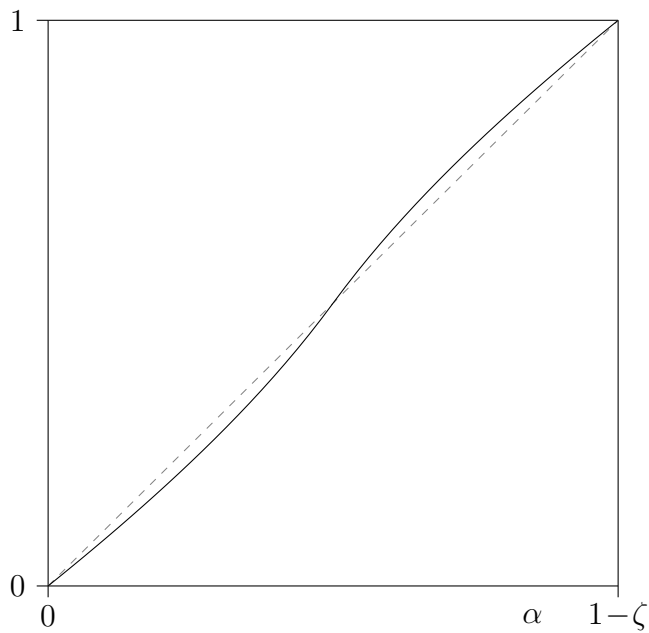


\section{Comparison with other methods}

In this section the above properties of the method of Eneström and Phragmén
will be compared with those of the main alternative methods for parliamentary elections through approval voting.
More specifically, we will consider
both Phragmén's iterative minimax method (\citeyear{phragmen:1894},\,\citeyear{phragmen:1895},\,\citeyear{phragmen:1896},\,\citeyear{phragmen:1899});
and Thiele's~methods (\citeyear{thiele:1895}),
all of which are extensions of D'Hondt's rule from party lists to open lists.

\subsection{Phragmén's minimax method}
\label{ssec:phrseq}

Phragmén's minimax method hinges on considering how to distribute each representative, i.\,e.~elected candidate,
among ``his'' electors, i.\,e.~those who approved him.
In this connection, it aims at minimizing the inequality between electors in what respects the total amount of representation obtained by each of them.
More specifically,
here Phragmén aimed at minimizing the maximal amount of representation obtained by an elector.
In the general case of open lists, such optimization is not easy to compute.
Which is why, instead of it, Phragmén proposed a greedy sequential procedure where
every step looks for an additional representative
that minimizes that maximal representation.

\medskip
This criterion of minimal inequality of representation between electors
is related to the notion of proportional representation.
Although it differs from the interpretation of this notion in terms of a quota,
it turns out that Phragmén's iterative minimax method satisfies also Theorems~\ref{st:droop} and \ref{st:pjr}
with $q = v/(n+1).$
Equivalently said, it enjoys optimal values of the indices $\pi_{\hbox{\scriptsize\sffamily same}}$ and $\pi_{\hbox{\scriptsize\sffamily PJR}}$ of \citealt[Section~7.2]{janson:2018}
(see also \citealt[Sats~13.5.(ii)]{janson:2012}).

\medskip
Concerning monotonicity, Phragmén's iterative minimax method behaves like that of Eneström and Phragmén:
monotonicity for individual candidates holds \cite[Prop.\,7.10]{mora:2015}
and monotonicity for party lists fails \cite[\S\,7.5]{mora:2015}.
House monotonicity is certainly a different matter
since Phragmén's iterative minimax method satisfies it by construction.

\medskip
On the other hand, concerning the asymptotic behaviour in the case of two parties,
Phragmén's iterative minimax method exhibits a singular behaviour where, for instance,
the smooth curve of Figure~\ref{fig:nocantor} is replaced by a ``devil's staircase''
where every rational value is the image of an interval of positive measure
(see \citealt[\S\,7.7]{mora:2015}, and \citealt[\S\,11.4]{janson:2017}).

\subsection{Thiele's methods}
\label{sec:thiele}

\smallskip
Thiele's methods 
aim at maximizing the total amount of satisfaction of the electors.
In this connection, the satisfaction $\sigma$ of an elector is postulated to depend only 
on the number $h$ of elected candidates that had been approved by that elector;
this dependence $\sigma(h)$ is assumed to be non-decreasing with $\sigma(0) = 0$ and $\sigma(1) = 1.$

More particularly, Thiele paid special attention to the case where this dependence has the following form:
\begin{equation}
\sigma(h) = 1 + \dfrac12 + \dots + \dfrac1h,\qquad \text{with $\sigma(0) = 0.$}
\end{equation}
As one can easily see, this function has the property that in the case of party lists 
the criterion of maximizing the total satisfaction leads to D'Hondt's rule.

As in \S\,\ref{ssec:phrseq}, the computational complexity of the general case of open lists
led Thiele to replace the original optimization criterion
by certain greedy sequential versions,
which are known respectively as Thiele's addition method and Thiele's elimination one.

\paragraph{Thiele's addition method.}
\label{ssec:thiele-addition}
In this method one starts with the empty set and
every step looks for an additional representative that results in a maximum increment of satisfaction.

For the sake of comparison with the method of Eneström and Phragmén,
it is worth noticing that Thiele's addition method can also be viewed in terms of a progressive reduction
of the value of each vote each time that it is used to elect a new candidate.
In fact, it amounts to the following reduction scheme: a ballot is reduced to $1/2$ of its value when it is first used to elect one of its candidates. When a second candidate of a ballot is elected, the value of that ballot is reduced to $1/3$ of its initial value, or equivalently, to $2/3$ of its previous value. Similarly, when a third candidate of a ballot is elected, the value of that ballot is reduced to $1/4$ of its initial value, or equivalently, to $3/4$ of its previous value. And so on.

This has some similitude to the method that we have been discussing, especially its ``simple fractions'' variation (\S\,\ref{pg:simple_fractions}).
However, the reduction factors of Thiele's method 
have nothing to do with
the number of ballots that supported the elected candidate, let alone with comparing this number to any prefixed quota. So, that similitude is only superficial.
In the words of \citealt[p.\,4]{phragmen:1906a}, Thiele's addition rule ``is a purely formal generalisation of D'Hondt's rule, and therefore lacks genuine justification.''

In particular, proportionality properties such as Theorems~\ref{st:droop} and \ref{st:pjr} cease to hold when the votes deviate from the case of disjoint party lists.
Consider, for instance, the following example due to \cite{tenow:1912}:
\begin{equation}
\label{eq:tenow-stra}
1\ a,\xsep 9\ a\,b,\xsep 9\ a\,c,\xsep 9\ b,\xsep 9\ c,\xsep 13\ k\,l\,m,
\end{equation}
which will be compared with
\begin{equation}
\label{eq:tenow-orig}
37\ a\,b\,c,\xsep 13\ k\,l\,m.
\end{equation}
Assume that $n=3.$ The quota is $q = 50/4 = 12.5.$ So, the hypotheses of Theorem~\ref{st:droop} are satisfied with $J = \{k,l,m\}$ and $\ell=1.$
If that theorem were true we should have $n_J\ge1.$ This holds in the case of \eqref{eq:tenow-orig}, where $a,\,b$ and $k$ are successively elected (as in D'Hondt's rule).
However in the case of \eqref{eq:tenow-stra}, Thiele's addition method successively elects $a,\,b$ and $c$ 
(whereas the method of Eneström and Phragmén successively elects $a,\,k$ and $b$).
By the way, one can imagine that \eqref{eq:tenow-orig} are sincere votes and \eqref{eq:tenow-stra} is a strategy that allows the \,$abc$\, party to get all three seats.
In the framework of \cite{janson:2018}, the fact that Thiele's addition method does not comply with Theorems~\ref{st:droop} and \ref{st:pjr} translates into its values of $\pi_{\hbox{\scriptsize\sffamily same}}(\ell,n)$ and $\pi_{\hbox{\scriptsize\sffamily PJR}}(\ell,n)$ being larger than the optimal value $\ell/(\nseg)$ (see \citealt[Section~7.4]{janson:2018}).

\medskip
Concerning monotonicity, Thiele's addition method behaves exactly as Phragmén's iterative minimax one:
monotonicity for individual candidates holds \citep[Theorem 14.2]{janson:2016},
monotonicity for party lists fails,
and house monotonicity holds by construction. 


\medskip
Finally, concerning the asymptotic behaviour in the case of two parties,
Thiele's addition method turns out to comply with the ideal behaviour
$\lim_{n\rightarrow\infty}n_A/n = \alpha/(\alpha+\beta) = \alpha/(1-\zeta)$ \citep[Example 12.10]{janson:2017}.


\paragraph{Thiele's elimination method.}
In contrast to the above addition procedure, 
here one starts with the the set of of all candidates
and every step looks for which of them should be removed in order to obtain a minimum decrement of satisfaction.

It turns out that this procedure does satisfy Theorem~\ref{st:droop} but not \ref{st:pjr} \cite[\S\,7.5]{janson:2018}.
In the particular case of \eqref{eq:tenow-stra} it successively eliminates $m,\,l$ and $a$,
thus resulting in the election of $b,\,c$ and $k$
(by the way this is also the result of Thiele's global optimization criterion).
This result agrees with Theorem~\ref{st:droop}, which grants one seat to the set $J=\{k,l,m\}.$
However, it is also true that this set of elected candidates does not include the most voted one, namely $a.$
Such a fact can be viewed as a major flaw of this method
(which view was endorsed by \citealt[p.\,301--302]{phragmen:1899}).
Another flaw of this procedure is that it lacks even the property of monotonicity for individual candidates.



For completeness, we will also mention that computational experiments 
about the asymptotic behaviour in the case of two parties,
show Thiele's elimination method to comply with the ideal behaviour
$\lim_{n\rightarrow\infty}n_A/n = \alpha/(\alpha+\beta) = \alpha/(1-\zeta)$.

Anyway, the elimination procedure is rather inappropriate for the case where the items that are being approved or not are parties.
In fact, in this case that procedure must begin by considering how many candidates are included in a party list,
in spite of the fact that this number should be rather irrelevant.

%
%
%
%

\newcommand\si{\raise2pt\hbox{\small$\surd$}}
\newcommand\no{$\times$}

\subsection{Comparison table}
\label{ssec:comparison}
Table~1 below summarizes the preceding results for the methods that assume approval voting.
In column `Type' we indicate the behaviour in the case of uninominal voting in terms of party lists: `Dr' means largest remainders with the Droop quota, and `D'H' means D'Hondt's rule. Columns `Thm.\,\ref{st:droop}' and `Thm.\,\ref{st:pjr}' indicate whether these theorems are satisfied or not. In the column `Mono', the value `ind'  means that monotonicity holds for individual candidates but not for party lists (see Section~\ref{sec:mono}) whereas `\no' means that monotonicity fails even for individual candidates. The column~`2Lim' refers to the asymptotic behaviour in the case of two parties; here the value `\si' means that the limit is the ideal fraction $\alpha/(\alpha+\beta),$ the value `\no' is motivated by the devil's staircase phenomenon of Phragmén's iterative minimax method, and the value `$\sim$' means a smooth function not very different from the ideal one. Finally, the column `Simpl' tries to categorize the simplicity of the method in two levels: `$\sim$' -- acceptable, and `\no' -- somewhat complex.

\makeatletter
\renewcommand\strut{\vrule\@height14pt\@depth6pt\@width\z@}
\makeatother
\begin{table}[h]
\begin{center}
\begin{small}
\begin{tabular}{ l | c | c | c | c | c | c |}
\strut								& Type & Thm.\,\ref{st:droop} & Thm.\,\ref{st:pjr} & Mono & 2Lim & Simpl \\ \hline
\strut \textsl{Eneström-Phragmén}	& Dr   & \si  & \si  & ind & $\sim$ & $\sim$ \\ \hline
\strut \textsl{Phragmén, maximin}	& D'H  & \si  & \si  & ind & \no & \no \\ \hline
\strut \textsl{Thiele, addition}	& D'H  & \no  & \no  & ind & \si & $\sim$ \\ \hline
\strut \textsl{Thiele, elimination}	& D'H  & \si  & \no  & \no & \si & \no \\ \hline
\end{tabular}
\end{small}
\captionsetup{width=110mm}
\caption{Comparison of different electoral methods based on approval voting.}
\end{center}
\end{table}


\section{An (unsuccessful) attempt at a divisor-like variation}

For disjoint closed lists,
the rules of largest remainders and D'Hondt are related in the following way:
D'Hondt's rule is the result of adjusting the quota so that
exactly $n$ seats are allocated without making use of any remainders.
In the present setting of (possibly intersecting) open lists,
one can try to do the same from the procedure of Eneström and Phragmén,
namely to adjust the quota so that
exactly $n$ seats are allocated 
under the conditions that every seat is allocated in exchange of one quota
and that none of the other candidates achieves a whole quota.
In our notation, and making allowance for ties,
this amounts to the quota $q$ and the number of seats $n$ being related in the following way:
\begin{equation}
\label{eq:penqfin}
w_*\pen \,\ge\, q \,\ge\, w_*\fin,
\end{equation}
where $w_*\ara$ are obtained by the algorithm of \S~1,
and therefore they depend on $q.$

\bigskip\noindent
\textit{Remark~6.1}.
The preceding inequalities are analogous to the following ones for D'Hondt's rule (see \citealt[\S4.6]{pukelsheim:2017}):
\begin{equation}
\min_i v_i/n_i \,\ge\, q \,\ge\, \max_i v_i/(n_i+1).
\end{equation}



Unfortunately,
such a plan is hindered by several difficulties.
To begin with, for a given $n$ one can have different values of $q$ 
that satisfy \eqref{eq:penqfin} but lead to different allocations of the $n$~seats.
In order to deal with this difficulty, one can think of further specifying $q,$
for instance, by requiring it to be as large as possible.
However, such a condition is not easy to compute.
In fact, it might seem that it amounts to solving the equation $w_*\pen=q$
(where the left-hand side depends on $q$)
but sometimes this equation has no solution at all.

These difficulties occur, for instance, in the following example,
where $A,\,B$ and $C$ are three party lists:
\begin{equation}
\label{eq:exempleFinal}
7\ A,\xsep 10\ B,\xsep 5\ A\,B,\xsep 17\ C,\xsep 13\ A\,C,\xsep 4\ B\,C.
\end{equation}
Figure~2 shows the dependence of $w_*\pen$ as a function of $q$ for $n=2...9$.
As one can see, for $n=2,\,3$ the corresponding curves do not reach the diagonal $w_*\pen=q$.
On the other hand, for $n=6$ the corresponding curve exhibits a discontinuity
that is associated with the allocation changing from $4\,C, \,1\,A, \,1\,B$
to $3\,C, \,2\,A, \,1\,B.$


\vskip3mm
{\leftskip-20mm
\hfil\includegraphics[scale=.8]{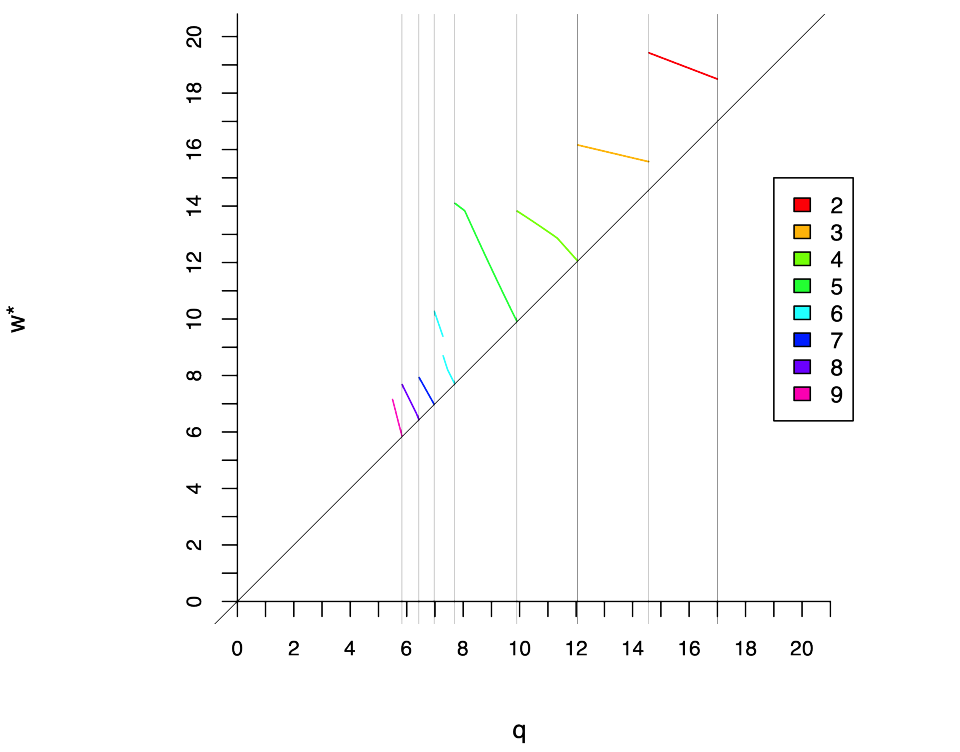}\hfil\par}
\vskip-1mm
\centerline{Figure~2. $w_*\pen$ as a function of $q$ for example~\eqref{eq:exempleFinal}}


\section{Providing for substitutes}
\label{substituts}

In practice, one must envisage the possibility that one of the elected candidates ceases to be available at some point and has to be replaced.
In order to keep the representativeness of the set of elected candidates, the ideal would be for the substitute to represent the same voters as the candidate who has ceased to be available. For party lists, the solution is obvious: just take another candidate from the same party list as a substitute. This applies even when the elector is allowed to approve more than one list, provided that the party list of the missing candidate still contains some unelected one (otherwise, the problem will have to be paraphrased in terms of individual candidates).

\smallskip
In the general case the solution is not so easy.
For the method that has been exposed in this article, a natural way to proceed is the following
(in the same spirit as \citealt[p.\,194--195]{droop:1881}):\,
1.~The votes and vote fractions in exchange for which the missing candidate had been elected are restored 
to the votes and vote fractions remaining after the $n$~seats have been allocated.
2.~ Having done this, the procedure for assigning a new seat is executed once again, with the only caveat that the missing candidate is not considered eligible.

\smallskip
Let us see how we would do it in example~1.1 of p.\,\pageref{exemple11}. Suppose candidate~$a$ is no longer available. This candidate had been elected in the first place, in exchange for 25 votes, specifically the $v_k[0]\!-\!v_k[1]$ that are specified below, which are obtained from the 43~votes where $a$ appears by applying the factor $25/43$:
\bgroup
\advance\abovedisplayskip by-4pt
\advance\belowdisplayskip by-4pt
$$
\hbox to\hsize{\hfill$12,209\ a\,b\,x,\xsep 11,628\ a\,b\,e\,f,\xsep 1,163\ a\,e\,u.$\hfill}
$$
To elect the substitute, we restore these votes to those that were left at the end (p.\,\pageref{romanents}); the result is the following values of $v_k[3]\!+\!v_k[0]\!-\!v_k[1]$:
$$
\renewcommand\xsep{\hskip.5em}
\hbox to\hsize{\hfill\scalebox{0.8}{$12,209\ a\,b\,x,\xsep 20\ a\,b\,e\,f,\xsep 4,535\ e\,f\,u\,v,\xsep 3,103\ u\,v,\xsep 0\ x\,y,\xsep 15\ z,\xsep 1,363\ a\,e \,u.$}\hfill}
$$
The support for each candidate is now as follows, where brackets indicate ineligible candidates (already elected or unavailable):
$$
\renewcommand\xsep{\hskip.5em}
\hbox to\hsize{\hfill\scalebox{0.8}{$(a\ 33,572),\xsep b\ 32,209,\xsep e\ 25,898,\xsep f\ 24,535,\xsep (u\ 9),\xsep v\ 7,637,\xsep (x\ 12,209),\xsep y\ 0,\xsep z\ 15.$ }\hfill}
$$
\egroup
Of the eligible candidates, the one with the most votes is $b,$ who is therefore elected to replace $a.$ This is no surprise, since these two candidates appear together most often. Similarly happens with $u$ and $v,$ so that in the event of $u$ ceasing to be available, it is replaced by $v,$ as obtained by considering the values of $v_k[3]\!+\!v_k[1]\!-\!v_k[2].$ On the other hand, $x,$ the candidate elected in third place, appears sometimes accompanied by $y$ and other times by $b$ (and $a$); if we apply the proposed procedure, we must consider $v_k[3]\!+\!v_k[2]\!-\!v_k[3] = v_k[2],$ whose values are collected in p.\,{remaining}, where it is seen that they result in the choice of $b$ as a substitute for $x.$

\bigskip\noindent
\textit{Remark~7.1.} The proposed procedure is not equivalent to running the entire algorithm from the beginning after having deleted the missing candidate. Indeed, doing so could lead to variations in the subsequently elected candidates.

\bigskip\noindent
\textit{Remark~7.2.} If the candidates who cease to be available are more than one, then dealing with them successively in one order or another can produce different results, and dealing with them all at once can aslso lead to different results. Therefore, in practice it will be necessary for the regulations to specify one of these different alternatives.

\bigskip\noindent
\textit{Remark~7.3.} \citeauthor{phragmen:1906a} (\citeyear{phragmen:1906a}, \citeyear{phragmen:1906b}, \citeyear{phragmen:1906c}) proposes a different procedure where each voter can specify his ``substitutes'', which would be understood as additional candidates that this voter still admits as epresentatives of him, but in the second instance. However, our proposal above is conceptually clearer and has a more general applicability (for example if the candidate who becomes unavailable is already a substitute).

\bigskip\noindent
\textit{Remark~7.4.} The preceding idea does not extend easily to the sequential minimax method of \citeauthor{phragmen:1894} (\citeyear{phragmen:1894}, \citeyear{phragmen:1895}, \citeyear{phragmen:1896}, \citeyear{phragmen:1899}) for which no procedure seems to have been proposed for the provision of substitutes.


\end{document}